\newtheorem{counter}{Counter}
\newtheorem{definition}[counter]{Definition}
\newtheorem{proposition}[counter]{Proposition}
\newtheorem{corollary}[counter]{Corollary}
\newtheorem{theorem}[counter]{Theorem}
\newtheorem{lemma}[counter]{Lemma}
\newtheorem{example}[counter]{Example}
\DeclareMathOperator{\cf}{\LTLsquare\!\!\rightarrow}
\DeclareMathOperator{\cfmin}{\LTLsquare\!\!\rightarrow_{\!_\mathit{min}}}
\DeclareMathOperator{\mcf}{\LTLdiamond\!\!\rightarrow}
\DeclareMathOperator{\mcfmin}{\LTLdiamond\!\!\rightarrow_{\!_\mathit{min}}}
\DeclareMathOperator{\ucf}{\LTLhalfsquare\!\!\rightarrow}
\DeclareMathOperator{\ucfmin}{\LTLhalfsquare\!\!\rightarrow_{\!_\mathit{min}}}
\DeclareMathOperator{\emcf}{\LTLhalfdiamond\!\!\rightarrow}
\DeclareMathOperator{\emcfmin}{\LTLhalfdiamond\!\!\rightarrow_{\!_\mathit{min}}}
\def\moverlay{\mathpalette\mov@rlay}
\def\mov@rlay#1#2{\leavevmode\vtop{%
		\baselineskip\z@skip \lineskiplimit-\maxdimen
		\ialign{\hfil$\m@th#1##$\hfil\cr#2\crcr}}}
\newcommand{\charfusion}[3][\mathord]{
	#1{\ifx#1\mathop\vphantom{#2}\fi
		\mathpalette\mov@rlay{#2\cr#3}
	}
	\ifx#1\mathop\expandafter\displaylimits\fi}
\newcommand{\cupdot}{\charfusion[\mathbin]{\cup}{\cdot}}
\newcommand{\hyperqptl}{\texttt{HyperQPTL}\xspace}
\newcommand{\qptl}{\texttt{QPTL}\xspace}
\newcommand{\qptlc}{\texttt{QPTL}$_\mathit{cf}$\xspace}
\newcommand{\ltl}{\texttt{LTL}\xspace}
\newcommand{\true}[0]{\mathit{true}}
\newcommand{\false}[0]{\mathit{false}}
\newcommand{\ldot}{\mathpunct{.}}
\newcommand{\U}{\LTLuntil}
\newcommand{\X}{\LTLnext}
\newcommand{\G}{\LTLglobally}
\newcommand{\F}{\LTLeventually}
\newcommand{\R}{\LTLrelease}
\newcommand{\Tr}{\mathit{T}}
\newcommand{\traceassign}{\Pi}
\newcommand{\tracevars}{\mathcal{V}}
\renewcommand{\models}{\vDash}
\newcommand{\nmodels}{\nvDash}
\newcommand{\donotshow}[1]{}
\newcounter{mylabelcounter}
\newcommand{\labelText}[2]{%
	#1\refstepcounter{mylabelcounter}%
	\immediate\write\@auxout{%
		\string\newlabel{#2}{{1}{\thepage}{{\unexpanded{#1}}}{mylabelcounter.\number\value{mylabelcounter}}{}}%
	}%
}
\title{Counterfactuals Modulo Temporal Logics\thanks{This work was partially supported by the DFG in project 389792660 (Center for Perspicuous Systems, TRR 248), and by the ERC Grant HYPER (No. 101055412).}}
\author{
Bernd Finkbeiner
\and
    Julian Siber
}
\institute{
  CISPA Helmholtz Center for Information Security,
  Saarbrücken, Germany\\
  \email{\{finkbeiner,julian.siber\}@cispa.de}
 }
\authorrunning{Finkbeiner and Siber}
\titlerunning{Counterfactuals Modulo Temporal Logics}
\begin{document}

\maketitle

\begin{abstract}
Lewis' theory of counterfactuals is the foundation of many contemporary notions of causality. In this paper, we extend this theory in the temporal direction to enable symbolic counterfactual reasoning on infinite sequences, such as counterexamples found by a model checker and trajectories produced by a reinforcement learning agent. In particular, our extension considers a more relaxed notion of similarity between worlds and proposes two additional counterfactual operators that close a semantic gap between the previous two in this more general setting. Further, we consider versions of counterfactuals that minimize the distance to the witnessing counterfactual worlds, a common requirement in causal analysis. To automate counterfactual reasoning in the temporal domain, we introduce a logic that combines temporal and counterfactual operators, and outline decision procedures for the satisfiability and trace-checking problems of this logic.
\end{abstract}

\section{Introduction}\label{sec:intro}

Evaluating counterfactual statements is a fundamental problem for many approaches to causal reasoning~\cite{MenziesB20}. Such reasoning can for instance be used to explain erroneous system behavior with a counterfactual statement such as `If the input $i$ at the first position of the observed computation $\pi$ had not been enabled then the system would not have reached an error $e$.' which can be formalized using the counterfactual operator $\cf$ and the temporal operator $\LTLeventually$: $$\pi \models (\lnot \mathit{i}) \cf (\lnot \LTLeventually \mathit{e}) \enspace .$$ Since the foundational work by Lewis~\cite{Lewis73a} on the formal semantics of counterfactual conditionals, many applications for counterfactuals~\cite{HalpernPearl05b,BeerBCOT12,KusnerLRS17,WachterMR17,BaierCFFJS21,CoenenDFFHHMS22} and some theoretical results on the decidability of the original theory~\cite{Lewis71} and related notions~\cite{EiterL02,AleksandrowiczC17} have been discovered. Still, certain domains have proven elusive for a long time, for instance, theories involving higher-order reasoning and an infinite number of variables. In this paper, we consider a domain that combines both of these aspects: temporal reasoning over infinite sequences. In particular, we consider counterfactual conditionals that relate two properties expressed in temporal logics, such as the temporal property $\lnot \LTLeventually \mathit{e}$ from the introductory example. Temporal logics are used ubiquitously as high-level specifications for verification~\cite{IEEE10,BaierK08} and synthesis~\cite{Finkbeiner16,MeyerSL18}, and recently have also found use in specifying reinforcement learning tasks~\cite{JothimuruganAB19,LiVB17}. Our work lifts the language of counterfactual reasoning to similar high-level expressions. We consider Quantified Propositional Temporal Logic (\texttt{QPTL}) because it can characterize the full class of $\omega$-regular properties and in this way subsumes popular specification languages. This results in our logic \qptlc, which mixes \qptl with counterfactual conditionals and can be used to check counterfactual dependencies between $\omega$-regular properties. We abstract away from any concrete causal models in this paper but refer to recent works by Halpern and Peters~\cite{HalpernP22}, and Coenen et al.~\cite{CoenenFFHMS22} on extending these models to infinitely many variables.

Focusing on the core counterfactual reasoning inherent to causality allows us to study several key problems arising in the temporal domain. We believe the main reason higher-order reasoning and an infinite number of variables induce difficulties for counterfactual reasoning is ultimately tied to Lewis' rejection of the \emph{Limit Assumption}, which stipulates that for any world $W$ and property $\varphi$, there is a unique set of worlds minimally close to $W$ that satisfy $\varphi$. If the assumption holds in some domain, counterfactual reasoning is simple: A statement such as `If $\varphi$ had held then $\psi$ would have held, too.', formally expressed by the formula $\varphi \cf \psi$, then would only need to compute the set of worlds minimally close to $W$ that satisfy $\varphi$ and check whether all of them satisfy $\psi$, too. However, as Lewis points out, the assumption is generally not true in any continuous domain, and as we will see later, it is also not generally true in the temporal domain. Previous works on defining notions of causality in settings with infinitely many variables sidestep the issue by restricting to similarity relations~\cite{HalpernPearl05a,CoenenFFHMS22} or logics for cause and effect~\cite{Leitner-FischerL13} that satisfy the Limit Assumption, but this imposes a significant toll on the precision of the inferred causes, since this requires coarsely overapproximating the set of closest traces.

One of the key insights of this work is that it is possible to reason about counterfactuals in the temporal domain even when rejecting the Limit Assumption. Without the assumption, evaluating counterfactual conditionals requires complex quantification over the possibly infinite chains of worlds ever closer to $W$. To solve these quantified statements in our domain, we use recent advances in the study of hyperproperties~\cite{ClarksonS10} and their corresponding temporal logics~\cite{ClarksonFKMRS14,Rabe16}, which originate in the verification of information-flow policies and allow to relate multiple traces of a system to another. This pushes their decidability to  the edge, for instance, the satisfiability problem of \hyperqptl, which extends \qptl with quantification over traces, is undecidable. While our counterfactual-temporal logic \qptlc has inherently relational semantics, its models are still traces, in contrast to the sets of traces modeled by \hyperqptl. Further, in \qptlc the trace-quantification is guarded by the counterfactual conditionals. We show that together this yields decidability for the satisfiability, model-checking and trace-checking problems of \qptlc by encoding them into the decidable model-checking problem of \hyperqptl .

We address several limitations  of Lewis' original theory of counterfactuals, with the goal that our logic \qptlc can be practically used to specify notions of temporal causality. For instance, the original theory requires a similarity relation between worlds that is total, i.e., for any two worlds $W_1$ and $W_2$ it needs to be possible to assess which one is closer to the reference world $W$. This turns out to be far too restrictive for reasonable similarity relations between infinite traces, and hence we extend Lewis' theory to non-total similarity relations. However, since in such relations, there may be several incomparable sets of worlds that may count as the worlds minimally close to $W$, this opens a semantic gap between Lewis' two proposed counterfactual operators. The crux is that, for instance, a naive extension of the `Would' counterfactual $\cf$ quantifies existentially over these sets of closest worlds, meaning its enough if there is one path from $W$ to worlds satisfying $\varphi$ where the closest worlds then also satisfy $\psi$. But there may be other paths to $\varphi$ where the closest worlds do not satisfy $\psi$. A similar problem exists with Lewis' semantics of the `Might' counterfactual. We argue that for both operators, the naive extensions  to non-total similarity relations do not match the intended semantics of the counterfactual statements they are supposed to formalize, and propose fixed semantics for non-total similarity relations. Another common requirement for causes is a notion of minimality~\cite{HalpernP01,Halpern15,Harbecke21,CoenenFFHMS22,CoenenDFFHHMS22}, such that negating the cause describes the minimal changes necessary to avoid the effect. This notion is not covered by Lewis' original theory. To enable our logic to express the minimality condition, we introduce \emph{minimal} counterfactual operators. The intuition behind a minimal counterfactual such as $\varphi \cfmin \psi$ is that it is meant to minimize the path from the reference world to the counterfactual worlds satisfying $\psi$. This boils down to a second-order requirement that quantifies over properties $\varphi'$ to see whether some of them characterize a superset of $\varphi$ and still qualify in the counterfactual. Since the second-order quantification is guarded by the minimal counterfactual operators, we can eliminate it by giving equisatisfiable first-order formulas that only quantify over traces.

We show that with an appropriate choice of underlying universe, \qptlc can express several notions of causality proposed in previous literature, and use it to forge an interesting link between event-based actual causation and property-based counterfactual causation.  

\paragraph{Contributions.} In summary, our contributions are as follows:
\begin{itemize}
	\item We extend Lewis' theory of counterfactual conditionals to non-total similarity relations (Section~\ref{sec:similarity}) by proposing two additional counterfactual conditionals that capture the intended semantics of `Would' and `Might' on these relations (Section~\ref{sec:nontotal}).
	\item We study a minimality criterion for counterfactuals that captures necessary reasoning for causal analysis and introduce minimal counterfactual operators (Section~\ref{sec:mincf}).
	\item We build a logic that mixes the classic counterfactuals due to Lewis and our newly proposed counterfactual conditionals with temporal properties expressed in \qptl, and show that the corresponding satisfiability and trace-checking problems are decidable (Section~\ref{sec:decidability}).
\end{itemize}
Necessary preliminaries on temporal logics are introduced in the following section (Section~\ref{sec:prelim}), and Lewis' original counterfactual conditionals are discussed in Section~\ref{sec:classic_cf}.

\paragraph{Related Work.}

Our theory is an extension of Lewis' theory of counterfactuals~\cite{Lewis73a,Lewis71} both in terms of the language of the antecedents and consequents, as well as for more general similarity relations and reasoning about minimality. Previous works in the context of axiomatizing causal modeling have extended the language of consequents to arbitrary Boolean formulas~\cite{Halpern00} as well as to counterfactual consequents~\cite{Briggs2012}. Our work lifts the language of cause and effect to restricted first- and second-order reasoning in an infinite domain and is in this way related to recent efforts by Halpern and Peters~\cite{HalpernP22} on causal reasoning with infinitely many variables. 

While to our knowledge there has been no previous work extending Lewis' logic of  counterfactuals to temporal reasoning, there are works that have proposed some notion of temporal causality in, e.g., Markov Decision Processes~\cite{ZiemekPFJB22} and reactive systems~\cite{CoenenFFHMS22}. Several previous works have made a connection between causality and hyperproperties~\cite{AbrahamB18,DimitrovaFT20,CoenenFFHMS22,CoenenDFFHHMS22}.

\section{Temporal Logics}\label{sec:prelim}

We consider temporal logics whose models are infinite \emph{traces} $t = t[0] t[1] \ldots \in (2^\mathit{AP})^\omega$ over some finite set of atomic propositions $\mathit{AP}$. As a basic temporal logic we consider Linear-time Temporal Logic (\texttt{LTL})~\cite{Pnueli77}. \ltl formulas are built with the following grammar, where $a \in \mathit{AP}$:
\begin{equation*}
\varphi \Coloneqq a \mid \neg \varphi \mid \varphi \land \varphi \mid \LTLnext \varphi \mid \varphi \U \varphi \enspace .
\end{equation*}
The semantics of \texttt{LTL} are given by the following satisfaction relation, which recurses over the positions $i$ of the trace.
\begin{equation*}
\begin{array}{lll}
t,i \models a       & \text{iff } & a \in t[i] \\
t,i  \models \neg \varphi              & \text{iff } & t,i  \nmodels \varphi \\
t,i  \models \varphi \land \psi         & \text{iff } & t,i  \models \varphi \text{ and } \pi_i \models \psi \\
t,i \models \X \varphi                & \text{iff } & t,i+1 \models \varphi \\
t,i \models \varphi\U\psi             & \text{iff } & \exists j \geq i \ldot t,j \models \psi~\land \forall i \leq k < j \ldot t,k \models \varphi
\end{array}
\end{equation*}
We say a trace $t$ satisfies a formula $\varphi$ iff the formula holds at the first position: $t,0 \models \varphi$, we also write $t \models \varphi$ to denote this. We also consider the usual derived Boolean ($\lor$, $\rightarrow$, $\leftrightarrow$) and temporal operators ($\varphi \R \psi \equiv \neg(\neg \varphi \U \neg \psi)$, $\F \varphi \equiv \true \U \varphi$, $\G \varphi \equiv \false \R \varphi$).

\begin{example}\label{ex:elevator}
    To illustrate how \texttt{LTL} can specify the dynamics of a system, consider an elevator that moves up ($u$) and down ($d$) between three floors bottom ($b$), middle ($m$), and top ($t$). We have a set of atomic propositions $\mathit{AP} = A \cup S$ composed of two subsets $A = \{u,d\}$ for actions and $S = \{b,m,t\}$ for states. The dynamics of the system starting at the lowest floor can be specified in an \ltl formula: 
    \begin{align*}
\varphi_{\mathit{elevator}} \equiv b \land \LTLglobally \big(&(b \land u \rightarrow \LTLnext m) \land (b \land d \rightarrow \LTLnext g) \land (m \land u \rightarrow \LTLnext t) \land (m \land d \rightarrow \LTLnext g)\,
\land\\&(t \land u \rightarrow \LTLnext t) \land (t \land d \rightarrow \LTLnext m)
\land (t \not\leftrightarrow b) \land (b \not\leftrightarrow m) \land (m \not\leftrightarrow t)\big) \enspace .
    \end{align*}
    The `Globally' operator $\LTLglobally$ universally quantifies over all time points in one sequence, requiring that all of the conjuncts in its body hold. The conjuncts themselves encode the dynamics, e.g., $b \land u \rightarrow \LTLnext m$ ensures that when the elevator moves up from the bottom floor, it reaches the middle floor in the next state (which the `Next' operator $\LTLnext$ accesses). The formulas in the last line encode that the elevator can only be on one floor at the same time, together with the others they also ensure that only one action can be done at any time point. The traces that satisfy the formula then describe all the valid dynamics of the system, e.g.:
    \begin{align*}
        t = \{b,u\}\{m,d\}(\{b,u\}\{m,d\})^\omega
    \end{align*}
    is a trace where the elevator cycles between the bottom and the middle floor. The $\omega$-superscript symbolizes that this part of the trace is repeated infinitely often.
\end{example}

\texttt{LTL} is of practical significance because its corresponding decision procedures are of comparatively low complexity. However, this comes at a cost in expressivity, such that it cannot specify that, e.g., an atomic proposition eventually holds at an odd position. To make our results applicable to as many properties as possible, we therefore consider Quantified Propositional Temporal Logic (\texttt{QPTL}), introduced by Sistla~\cite{Sistla83}, throughout the technical sections of this paper. \texttt{QPTL} extends \texttt{LTL} by quantification over atomic propositions. Its syntax is built atop of \texttt{LTL} as follows, where $q \not\in \mathit{AP}$ is a fresh atomic proposition, and $\varphi$ is an \texttt{LTL} formula:
\begin{equation*}
\psi \Coloneqq \exists q \ldot \psi \mid \forall q \ldot \psi \mid \varphi \enspace .
\end{equation*}
As presented by Finkbeiner et al.~\cite{FinkbeinerHHT20}, the semantics of the formulas quantifying over propositions can be stated using a replacement function $t[q \mapsto t_q]$ that given a trace $t \in (2^\mathit{AP})^\omega$ and a trace $t_q \in (2^{\{q\}})^\omega$ sets the occurrences of $q$ in $t$ to the ones in $t_q$, i.e., $t[q \mapsto t_q] =_{\{q\}} t_q$ and $t[q \mapsto t_q] =_{\mathit{AP} \setminus \{q\}} t$, where $t =_E t'$ means the truth value of the subset $E \subseteq \mathit{AP}$ agrees on all positions of the two traces.
\begin{equation*}
\begin{array}{lll}
t_i \models \exists q \ldot \psi      & \text{iff } & \exists t_q \in (2^{\{q\}})^\omega \ldot t[q \mapsto t_q] \models \psi \\
t_i  \models \forall q \ldot \psi             & \text{iff } & \forall t_q \in (2^{\{q\}})^\omega \ldot t[q \mapsto t_q] \models \psi \\
\end{array}
\end{equation*}
For a formula $\varphi$ in \qptl (and hence also in the fragment \ltl), we denote by $\mathcal{L}(\varphi)$ the set of traces that satisfy it.
\begin{example}
    Quantification over propositions allows limited forms of counting in \texttt{QPTL}, such as in the following formula $\varphi_{odd}$ that tracks the parity of positions with $q$ and hence can express that the atomic proposition $b$ eventually holds at an odd position:
    $$\varphi_{odd} \equiv \exists q \ldot \lnot q \land \LTLglobally (\lnot q \rightarrow \LTLnext q) \land \LTLeventually (q \land b) \enspace .$$
    Since on trace $\pi$ from Example~\ref{ex:elevator} $b$ only holds at even positions we have that $\pi \nmodels \varphi_{odd}$.
\end{example}

The semantics of the counterfactual conditionals we consider in this work require quantification over the worlds described by their antecedents and consequents. In our case, the worlds are traces. To express quantification over traces, we make use of hyperlogics~\cite{ClarksonFKMRS14,CoenenFHH19}, temporal logics that originated in information-flow security and relate multiple traces to one another. The hyper-counterpart to \qptl is \hyperqptl~\cite{Rabe16} and extends the syntax with trace quantifiers over a set of trace variables $\mathcal{V}$, where $\pi \in \mathcal{V}$. Since \hyperqptl is only used for decidability proofs, and all proofs are found in the appendix due to space reasons, we also state the syntax and semantics of \hyperqptl only in Appendix~\ref{app:hyperqptl}.

\section{Counterfactuals}\label{sec:cf}

We now outline our extended version of Lewis' theory of counterfactual conditionals. First, we extend Lewis' notion of a similarity relation to non-total orders (Section~\ref{sec:similarity}). Then, we introduce the classic counterfactuals (Section~\ref{sec:classic_cf}) and two new operators (Section~\ref{sec:nontotal}), and lastly, we consider a minimality criterion for counterfactual antecedents (Section~\ref{sec:mincf}). Since the concepts discussed in this section are not only applicable to the traces and temporal logics discussed in the previous section, we will adopt Lewis' modal nomenclature and speak of worlds and properties. We assume a set of worlds $\mathcal{U}$ called \emph{universe}. If not stated explicitly otherwise, all quantifiers in this section will quantify over $\mathcal{U}$. Further, we assume some logic $\mathcal{L}$ and a satisfaction relation that tells us for any world $W \in \mathcal{U}$, whether it satisfies some property $\varphi \in \mathcal{L}$, which we denote with $W \models \varphi$. We will make the connection to the previous section clear by using traces as worlds and \texttt{LTL} formulas as properties in our concrete examples.

\subsection{The Distance Between Worlds}\label{sec:similarity}

The semantics of counterfactuals rely on reasoning about the relative similarity of worlds with respect to the reference world in which the counterfactual is evaluated. Directly mapping Lewis' counterfactuals to our setting would necessitate a total preorder over the set of traces to tell us which of any two given traces is closer to our original trace. However, in practice, such a total order is unrealistic not just in our trace-based context, as changes between two worlds may simply be incomparable. For instance, consider that changing atomic proposition $u$ at position $0$ has the same quantitative distance as changing it at position $1$, but since the identity of these changes differs, the direction in the space of worlds is different. Many instances of counterfactual reasoning, therefore, base their notion of distance on subset relationships between changes, i.e., some world is further away than another if the changes necessary to obtain the former are a superset of the changes necessary to obtain the latter~\cite{HalpernP01,Halpern15,CoenenFFHMS22}. Then, if no subset relationship in either directions holds between the changes manifesting in two worlds, their distance is incomparable. The underlying spatial structure of such a similarity relation is a lattice over the equivalence classes of some preorder. We generalize Lewis' counterfactual reasoning to preorders to account for these more general similarity relations.
Formally, we require a \emph{comparative similarity relation} $\leq_W$, which is a preorder on $\mathcal{U}$ such that $W$ is a minimum: $\forall W' \ldot W' \not\leq_W W$.\footnote{Lewis refines his similarity relation based on a notion of accessibility. As our similarity relation is not necessarily total, accessibility can be easily encoded by not relating in $\leq_W$ the models inaccessible from $W$. We could then express accessibility of a world $W'$ from $W$ by requiring $W \leq_W W'$, but we will abstract away from this concept for simplicity and assume that $W$ is the unique minimum: $\forall W' \ldot W \leq_W W'$.} 

\begin{example}\label{ex:distance}
    In the context of trace logics as outlined in the previous section, worlds correspond to infinite traces such as $t$ in Example~\ref{ex:elevator}. Our universe may be given by the language of some \qptl formula such as the one describing the dynamics of the elevator system in Example~\ref{ex:elevator}, so we have  $\mathcal{U} = \mathcal{L}(\varphi_{\mathit{elevator}})$. A similarity relation may track the changes with respect to the reference trace $t$ over a subset $X$ of atomic propositions and can be formalized as follows:
    \begin{align*}
        \leq_t\!\!(\mathit{X}) = \{ (t_1, t_2) \mid \; &\forall n \in \mathbb{N} \ldot \forall x \in \mathit{X} \ldot  t[n] \neq_{\{x\}} t_1[n] \Rightarrow t[n] \neq_{\{x\}} t_2[n] \} \enspace .
    \end{align*}
    To illustrate with the elevator system from Example~\ref{ex:elevator}, the similarity relation $\leq_t\!\!(\mathit{A})$ orders a trace $t'' = \{b,d\}\{b,d\}(\{b,u\}\{m,d\})^\omega$, that changes the first two actions in trace $t$ but keeps the other actions the same, as closer to $t$ than trace $t' = (\{b,d\})^\omega$, which changes them on the whole sequence. We have $t'' \leq_t\!\!(\mathit{A}) \; t'$. Note that the above similarity relation is not total and would hence not be covered by Lewis' original theory.
    
\end{example}

\subsection{Classic Counterfactuals}\label{sec:classic_cf}

We start this section by recalling the semantics of Lewis' counterfactual operators $\cf$ and $\mcf$, based on the reformulation for similarity relations~\cite{Lewis73a}. In Lewis' theory, the operator $\cf$ is a formalization of `Would' counterfactual statements such as `If the elevator had eventually moved up two times in a row, then it would have reached the top floor.' The operator $\mcf$ is a formalization of `Might' counterfactual statements such as `If the elevator had eventually moved up two times in a row, then it might have reached the top floor two steps after the start.' Intuitively, both of the statements should be true on trace $t$ fro Example~\ref{ex:elevator}. No matter where we change the trace such that the elevator moves upwards twice, it will in all cases end up at the top floor. And there is one instance, i.e., when changing the first two actions appropriately, that it will be at the top floor at the third position. Formally speaking, the distinction between the two statements stems from different quantification over the closest worlds. `Would' counterfactuals are statements that quantify universally over all closest worlds that satisfy the antecedent, while `Might' counterfactuals quantify existentially. There are further subtle differences in the vacuous case which we will discuss after giving the formal semantics.
\begin{definition}[Semantics of `Would']\label{def:would_cf}
A world $W$ satisfies $\varphi \cf \psi$ iff:
$$ \forall W_1 \ldot W_1 \nmodels \varphi \; (\labelText{1}{lefteq:would})  \text{, or } \exists W_1 \ldot W_1 \models \varphi \land \forall W_2 \ldot  W_2 \leq_W W_1 \Rightarrow W_2 \models \varphi \rightarrow \psi \; (\labelText{2}{righteq:would}) \enspace .$$
\end{definition}
It is worth pointing out that the complex expression in Condition~\nameref{righteq:would} of Definition~\ref{def:would_cf} mainly stems from Lewis' rejection of the Limit Assumption, which poses that for any antecedent $\varphi$ and world $W$ there exists a unique set of (equally) closest worlds satisfying $\varphi$. Then it would be easy to simply quantify over this set universally and require the consequent to hold in all of the worlds. However, in many scenarios, this assumption, unfortunately, does not hold. Instead, there may in fact be an infinite chain of ever closer worlds that satisfy $\varphi$. In these instances, what we are rather interested in is finding a `threshold world' after which all closer worlds satisfying the antecedent $\varphi$ also satisfy the consequent $\psi$.

\begin{example}\label{ex:ftop}
    $\cf$ over \qptl formulas allows us to express the mix of counterfactual and temporal statements that is `If the elevator had eventually moved up two times in a row, then it would have reached the top floor.' It yields the following formula:
    $$\varphi_{top} \equiv \LTLeventually (u \land \LTLnext u) \cf \LTLeventually t \enspace .$$
    Interpreted in the universe given by $\mathcal{L}(\varphi_{\mathit{elevator}})$ and with the similarity relation $\leq_t\!\!(\mathit{A}) $, we have that trace $t = \{b,u\}\{m,d\}(\{b,u\}\{m,d\})^\omega$ from Example~\ref{ex:elevator} satisfies $\varphi_{top}$, because no matter where we change actions to ensure two moves upward in a row, we always end up in the top floor (this in fact holds for any trace in the universe). 
    
    For a more complex formula illustrating that the Limit Assumption does not hold in this setting, consider the statement `If the elevator would eventually only move downwards, then it would eventually stay on the bottom floor.' This corresponds to the formula:
    $$\varphi_{bottom} \equiv \LTLeventually (\LTLglobally d) \cf \LTLeventually (\LTLglobally b) \enspace .$$
    Trace $t' = (\{b,d\})^\omega$ satisfies $\LTLeventually (\LTLglobally d)$, but there is an infinite chain of traces closer to $t$ that also satisfy $\LTLeventually (\LTLglobally d)$:
    \begin{align*}
    &t'' = \{b,u\}\{m,d\}(\{b,d\})^\omega,~t''' = \{b,u\}\{m,d\}\{b,u\}\{m,d\}(\{b,d\})^\omega, \ldots
    \end{align*}
    and so on. Hence, we cannot avoid Lewis' complex quantification over traces to evaluate temporal counterfactuals in the general case.
\end{example}

As one can see from Condition~\nameref{lefteq:would} in Definition~\ref{def:would_cf}, a `Would' counterfactual is vacuously satisfied by a world if there are no related worlds that satisfy the antecedent. In contrast, a `Might' counterfactual strictly requires a world that satisfies the antecedent, mainly because Lewis bases the semantics of the two operators on the following duality law: $\varphi \cf \psi \equiv \lnot(\varphi \mcf \lnot \psi)$~\cite{Lewis73a}. This yields the following semantics for the `Might' counterfactual.

\begin{definition}[Semantics of `Might']\label{def:might_cf}
A world $W$ satisfies $\varphi \mcf \psi$ iff all of the following holds:
$$\exists W_1 \ldot W_1 \models \varphi~(\labelText{1}{lefteq:might})  \text{, and } \forall W_1 \ldot W_1 \models \varphi \Rightarrow \exists W_2 \ldot W_2 \leq_W W_1 \land W_1 \models \varphi \land \psi~(\labelText{2}{righteq:might}) \enspace .$$
\end{definition}

Again, significant complexity is introduced into the definition based on the rejection of the Limit Assumption. Here, however, the idea is not to find a `threshold world', but instead to find for any world in the chain of ever closer worlds another one that is closer (or equally close) such that both antecedent and consequent are true.

\subsection{Counterfactuals Over Non-total Similarity Relations}\label{sec:nontotal}

The semantics proposed by Lewis' work well if the similarity relation is a total order. However, as we can see in the following example, the semantics do not match the intuitive meaning of the operators when the similarity relation is not total, as in our setting.

\begin{example}\label{ex:problempreorders}
    Consider the statement `If the elevator had eventually moved up two times in a row, then it would have reached the top floor two steps after the start.',  which corresponds to the following formula:
    $$\varphi_{top}'' \equiv \LTLeventually (u \land \LTLnext u) \cf \LTLnext (\LTLnext t) \enspace .$$
    Intuitively, this statement should not be satisfied by trace $t$ from the previous examples. After all, there is trace
    $$t'_1 = \{b,u\}\{m,d\}\{b,u\}\{m,u\}\{t,d\}\{m,d\}t[6]t[7]\ldots$$
    that is a closest trace satisfying $\LTLeventually (u \land \LTLnext u)$ and it does not satisfy $\LTLnext (\LTLnext t)$. Yet, we can simply instantiate the existential quantifier in Condition~\nameref{righteq:would} of Definition~\ref{def:would_cf} with
    $$t'_2 = \{b,u\}\{m,u\}\{t,d\}\{m,d\}t[4]t[5]\ldots \enspace .$$
    Since there are no traces between $t'_2$ and $t$, the condition is satisfied, and we obtain that $t \models \varphi_{top}'' $. Now, consider
    $$\varphi_{top}''' \equiv \LTLeventually (u \land \LTLnext u) \mcf \LTLnext (\LTLnext t) \enspace .$$
    Intuitively, this statement should be satisfied, as there are indeed closest worlds that satisfy the consequent, namely $t_2'$. However, since the semantics of the operator quantifies over all worlds satisfying the antecedent and requires a smaller one that satisfies the consequent for each, the existence of $t_1'$ means that $t \nmodels \varphi_{top}'''$.
\end{example}

Example~\ref{ex:problempreorders} shows that on preorders the semantics of $\cf$ is too weak and that of $\mcf$ too strong to account for the intended meaning of the operators. Intuitively, `Would' should express that in the set of closest possible worlds, all worlds satisfy the consequent. However, in preorders there may exist multiple incomparable classes of worlds on different chains along the similarity relation. The focal question is how to quantify over these classes, existentially such that one class is enough (like  in the naive extension) or universally such that in all classes all models have to satisfy the consequent.

\begin{figure}[t]
    \centering
    \begin{subfigure}[b]{0.5\linewidth}
    \centering
    \scalebox{0.8}{
    \begin{tikzpicture}[node distance=3.5em,draw,thick]
    \node[draw,circle,minimum size=2em,fill=white](c) at (0,0){$W$};
    \node[draw,circle,minimum size=2em,above of=c,fill=white](n){};
    \node[draw,circle,minimum size=2em,below of=c,fill=white](s){};
    \node[draw,circle,minimum size=2em,left of=c,fill=white](w){};
    \node[draw,circle,minimum size=2em,right of=c,fill=white](e){};
    \node[draw,circle,minimum size=2em,below left of=c,fill=white](sw){};
    \node[draw,circle,minimum size=2em,below right of=c,fill=white](se){};
    \node[draw,circle,minimum size=2em,above left of=c,fill=white](nw){};
    \node[draw,circle,minimum size=2em,above right of=c,fill=teal](ne){};
    \node[draw,circle,minimum size=2em,above of=n,fill=white](nn){};
    \node[draw,circle,minimum size=2em,below of=s,fill=teal](ss){};
    \node[draw,circle,minimum size=2em,left of=w,fill=white](ww){};
    \node[draw,circle,minimum size=2em,right of=e,fill=white](ee){};
    \node[draw,circle,minimum size=2em,below left of=sw,fill=teal](sw2){};
    \node[draw,circle,minimum size=2em,below right of=se,fill=teal](se2){};
    \node[draw,circle,minimum size=2em,above left of=nw,fill=teal](nw2){};
    \node[draw,circle,minimum size=2em,above right of=ne,fill=teal](ne2){};
    \path[->,draw](c) -- (n);
    \path[->,draw](c) -- (s);
    \path[->,draw](c) -- (w);
    \path[->,draw](c) -- (e);
    \path[->,draw](c) -- (nw);
    \path[->,draw](c) -- (ne);
    \path[->,draw](c) -- (sw);
    \path[->,draw](c) -- (se);
    \path[->,draw](n) -- (nn);
    \path[->,draw](s) -- (ss);
    \path[->,draw](w) -- (ww);
    \path[->,draw](e) -- (ee);
    \path[->,draw](nw) -- (nw2);
    \path[->,draw](ne) -- (ne2);
    \path[->,draw](sw) -- (sw2);
    \path[->,draw](se) -- (se2);
    \path[<->,draw](ss) -- (se2);
    \path[<->,draw](ww) -- (nw2);
    \begin{scope}[on background layer]
    \draw [line width=4mm, Apricot, fill=Apricot] plot [smooth cycle] coordinates {(s.south) (sw.south west)(ww.south west)(nw2.north west)(nn.north)(ne.north east)(ee.north)(ee.east)(ee.south)};
    \draw [line width=1.5mm, Mahogany, fill=Mahogany] plot [smooth cycle,tension=0.5] coordinates {(c.north west)(c.north)(c.north east)(c.east)(c.south east)(sw.south east)(sw.south)(sw.south west)(sw.west)(sw.north west)};
    \end{scope}
    \end{tikzpicture}}
    \caption{`Would' and `Universal Would'.}\label{subfig:would}
    \end{subfigure}%
    \begin{subfigure}[b]{0.5\linewidth}
    \centering
    \scalebox{0.8}{
    \begin{tikzpicture}[node distance=3.5em,draw,thick]
    \node[draw,circle,minimum size=2em,fill=white](c) at (0,0){$W$};
    \node[draw,circle,minimum size=2em,above of=c,fill=white](n){};
    \node[draw,circle,minimum size=2em,below of=c,fill=white](s){};
    \node[draw,circle,minimum size=2em,left of=c,fill=white](w){};
    \node[draw,circle,minimum size=2em,right of=c,fill=white](e){};
    \node[draw,circle,minimum size=2em,below left of=c,fill=white](sw){};
    \node[draw,circle,minimum size=2em,below right of=c,fill=white](se){};
    \node[draw,circle,minimum size=2em,above left of=c,fill=white](nw){};
    \node[draw,circle,minimum size=2em,above right of=c,fill=teal](ne){};
    \node[draw,circle,minimum size=2em,above of=n,fill=white](nn){};
    \node[draw,circle,minimum size=2em,below of=s,fill=teal](ss){};
    \node[draw,circle,minimum size=2em,left of=w,fill=white](ww){};
    \node[draw,circle,minimum size=2em,right of=e,fill=white](ee){};
    \node[draw,circle,minimum size=2em,below left of=sw,fill=teal](sw2){};
    \node[draw,circle,minimum size=2em,below right of=se,fill=teal](se2){};
    \node[draw,circle,minimum size=2em,above left of=nw,fill=teal](nw2){};
    \node[draw,circle,minimum size=2em,above right of=ne,fill=teal](ne2){};
    \path[->,draw](c) -- (n);
    \path[->,draw](c) -- (s);
    \path[->,draw](c) -- (w);
    \path[->,draw](c) -- (e);
    \path[->,draw](c) -- (nw);
    \path[->,draw](c) -- (ne);
    \path[->,draw](c) -- (sw);
    \path[->,draw](c) -- (se);
    \path[->,draw](n) -- (nn);
    \path[->,draw](s) -- (ss);
    \path[->,draw](w) -- node(wm){} (ww);
    \path[->,draw](e) -- (ee);
    \path[->,draw](nw) -- node(nwm){} (nw2);
    \path[->,draw](ne) -- (ne2);
    \path[->,draw](sw) -- node(swm){} (sw2);
    \path[->,draw](se) -- (se2);
    \path[<->,draw](ss) -- (se2);
    \path[<->,draw](ww) -- (nw2);
    \begin{scope}[on background layer]
    \draw [line width=4mm, RedViolet, fill=RedViolet] plot [smooth cycle] coordinates {(s.south) (sw.south west)(w.south west)(nw.north west)(nn.north)(ne.north east)(ee.north)(ee.east)(ee.south)};
    \draw [line width=2mm, Lavender, fill=Lavender] plot [smooth cycle,tension=0.6] coordinates {(nw.north)(c.east)(sw.south)(swm.center)(wm.center)(nwm.center)};
    \end{scope}
    \end{tikzpicture}}
    \caption{`Might' and `Existential Might'.}\label{subfig:might}
    \end{subfigure}
    \caption{The set of worlds characterized by the counterfactuals on some similarity relation $\leq_W$. Nodes represent worlds (e.g., $W_{1,2,\ldots}$) where the center node is the reference world $W$. An edge from $W_1$ to $W_2$ means $W_1 \leq_W W_2$. For simplicity, we omit edges that can be inferred from reflexivity or transitivity of $\leq_W$. Nodes of worlds that satisfy $\psi$ are colored in teal.}
    \label{fig:cf_preorders}
\end{figure}
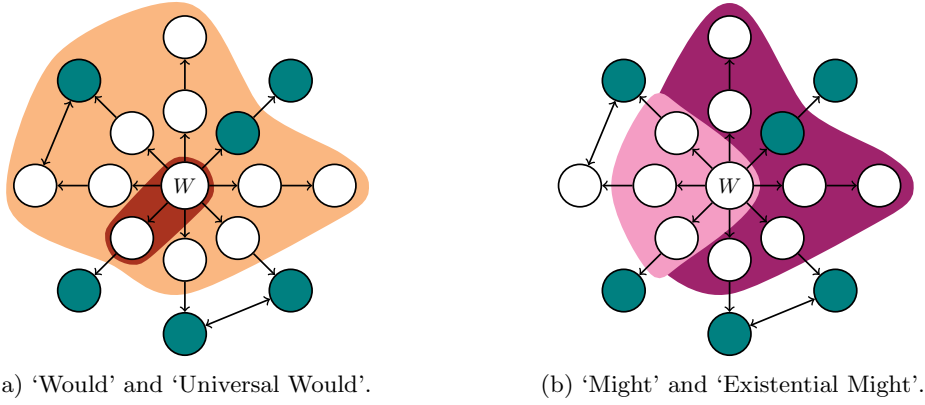

Figure~\ref{fig:cf_preorders} abstractly illustrates the problem with the `Would' counterfactual in Subfigure~\ref{subfig:would} and the problem with the `Might' counterfactual in Subfigure~\ref{subfig:might}. The areas colored in the dark colors represent the complement of the set characterized by an antecedent that satisfies the corresponding counterfactual in this universe and similarity relation, where the worlds colored in teal satisfy $\psi$. For example $W \models \varphi_{Would} \cf \psi$ if the worlds satisfying $\lnot \varphi_{Would}$ are the ones included in the dark brown area in Subfigure~\ref{subfig:would}. Here we can see that there are paths out of this area that lead to closest $\varphi_{Would}$-worlds that do not satisfy $\psi$, which clashes with the intended semantics of `If $\varphi_{Would}$ were true, then $\psi$ would be true as well.' What we instead want to capture is something like the bright brown area: No matter where we leave this area along the similarity relation, we always end up in a $\psi$-world. We capture this intention with the `Universal Would' counterfactual defined in the following, which requires all closest $\varphi$-worlds on any chain in the similarity relation to satisfy $\psi$.

\begin{definition}[Semantics of `Universal Would']\label{def:uwould_cf}
A world $W$ satisfies $\varphi \ucf \psi$ iff either of the following holds:
\begin{align*}
	&\forall W_1 \ldot W_1 \nmodels \varphi \text{, or }\\ &\forall W_1 \ldot W_1 \models \varphi \Rightarrow \exists W_2 \ldot W_2 \leq_W W_1 \land W_2 \models \varphi \land \forall W_3 \ldot W_3 \leq_W W_2 \Rightarrow W_3 \models \varphi \rightarrow \psi \enspace .
\end{align*}
\end{definition}

To ground our definition, we can show that it is equivalent to Lewis' classic `Would' counterfactual on total orders.

 \begin{proposition}\label{prop:equiv_ccf_mcf}
    If $\leq_W$ is a total order, then `Universal Would' and `Would' are equivalent, i.e., for all pairs of properties $\varphi$ and $\psi$ we have that $\varphi \ucf \psi \equiv \varphi \cf \psi$.
\end{proposition}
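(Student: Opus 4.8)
The plan is to split the equivalence along the vacuous case shared by both operators and then reduce to comparing the two non-vacuous disjuncts. Both definitions begin with the identical disjunct $\forall W_1 \ldot W_1 \nmodels \varphi$, so I would first dispatch the case in which no world satisfies $\varphi$: here both formulas hold vacuously and the equivalence is immediate. It then remains to treat the case where at least one $\varphi$-world exists and to show that the second disjunct of Definition~\ref{def:would_cf} (call it the \emph{`Would'-body}) is equivalent to the second disjunct of Definition~\ref{def:uwould_cf} (the \emph{`Universal Would'-body}).

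To streamline the argument, I would introduce the abbreviation $T(V)$ for the predicate $V \models \varphi \land \forall V' \ldot V' \leq_W V \Rightarrow V' \models \varphi \rightarrow \psi$, read as ``$V$ is a threshold $\varphi$-world below which the antecedent already forces the consequent.'' With this notation the `Would'-body reads $\exists W_1 \ldot T(W_1)$, while the `Universal Would'-body reads $\forall W_1 \ldot W_1 \models \varphi \Rightarrow \exists W_2 \ldot W_2 \leq_W W_1 \land T(W_2)$. The equivalence to prove is thus that the existence of one threshold world coincides with every $\varphi$-world admitting a threshold world at least as close to $W$.

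For the direction from `Universal Would' to `Would', I would simply instantiate the universal statement at any fixed $\varphi$-world (available in the non-vacuous case); the threshold world it returns witnesses the existential of the `Would'-body, and this direction uses neither totality nor transitivity. The converse is where the order structure enters, and I expect it to be the crux. Given a fixed threshold world $W^*$ with $T(W^*)$ and an arbitrary $\varphi$-world $W_1$, totality lets me compare them: if $W^* \leq_W W_1$, then $W^*$ itself is the required threshold below $W_1$; otherwise totality gives $W_1 \leq_W W^*$, and I would argue that $W_1$ is then itself a threshold, i.e.\ $T(W_1)$ holds. The latter is the key calculation: any $W_3 \leq_W W_1$ satisfies $W_3 \leq_W W^*$ by transitivity, so the clause inside $T(W^*)$ already guarantees $W_3 \models \varphi \rightarrow \psi$, establishing $T(W_1)$.

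The main obstacle is precisely this last step, and it pinpoints exactly where totality is indispensable: without it, an arbitrary $\varphi$-world $W_1$ could be incomparable to every available threshold world $W^*$, so neither $W^*$ nor $W_1$ need serve as a threshold below $W_1$ — which is the very semantic gap that motivates introducing the separate `Universal Would' operator. Concluding, I would assemble the two directions together with the shared vacuous case to obtain $W \models \varphi \ucf \psi \iff W \models \varphi \cf \psi$ for every reference world $W$, which is the claimed equivalence of the operators.
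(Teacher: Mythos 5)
Your proposal is correct and follows essentially the same route as the paper's proof: both dispatch the shared vacuous disjunct first, obtain the easy direction by instantiating the universal of $\ucf$ at any $\varphi$-world, and for the converse use totality to compare the given threshold world with an arbitrary $\varphi$-world, letting transitivity close the case where the arbitrary world is below the threshold. Your predicate $T(V)$ is merely a tidy repackaging of the same case analysis, and your remark on where totality is indispensable matches the paper's motivation for introducing $\ucf$.
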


We now introduce a similar counterpart to the `Might' counterfactual. We derive the semantics from a duality law similar to the original one: $\varphi \ucf \psi \equiv \lnot (\varphi \emcf \lnot \psi)$. 

\begin{definition}[Semantics of `Existential Might']\label{def:emight_cf}
A world $W$ satisfies $\varphi \emcf \psi$ iff:
$$\exists W_1 \ldot W_1 \models \varphi \land \forall W_2 \ldot W_2 \leq_W W_1 \land W_2 \models \varphi \Rightarrow \exists W_3 \ldot W_3 \leq_W W_2 \land W_3 \models \varphi \land \psi \enspace .$$
\end{definition}

Since we derived the `Existential Might' counterfactual from a similar duality law as Lewis, we can deduce from Proposition~\ref{prop:equiv_ccf_mcf} that our `Existential Might' and the original `Might' are also equivalent for total similarity relations.

 \begin{corollary}\label{cor:equiv_emcf_mcf}
    If $\leq_W$ is a total order, then `Existential Might' and `Might' are equivalent, i.e., for all pairs of properties $\varphi$ and $\psi$ we have that $\varphi \emcf \psi \equiv \varphi \mcf \psi$.
\end{corollary}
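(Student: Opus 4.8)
The plan is to avoid reasoning about the explicit quantifier semantics of Definition~\ref{def:emight_cf} directly and instead reduce the claim to Proposition~\ref{prop:equiv_ccf_mcf} purely by duality. Both pairs of operators are linked by a duality law: Lewis' law gives $\varphi \cf \psi \equiv \lnot(\varphi \mcf \lnot\psi)$, and the `Existential Might' was designed to satisfy the analogous law $\varphi \ucf \psi \equiv \lnot(\varphi \emcf \lnot\psi)$. Since both laws hold for all pairs of properties, I would first instantiate the consequent with $\lnot\psi$ and use classical double-negation elimination ($\lnot\lnot\psi \equiv \psi$) to rewrite them into the dual form $\varphi \mcf \psi \equiv \lnot(\varphi \cf \lnot\psi)$ and $\varphi \emcf \psi \equiv \lnot(\varphi \ucf \lnot\psi)$, which express each `Might'-style operator as the negation of the corresponding `Would'-style operator applied to the negated consequent.

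Next, fixing a total order $\leq_W$, I would apply Proposition~\ref{prop:equiv_ccf_mcf} with $\lnot\psi$ in place of $\psi$, yielding $\varphi \ucf \lnot\psi \equiv \varphi \cf \lnot\psi$. Negating both sides and substituting through the two dual forms then gives the chain $\varphi \emcf \psi \equiv \lnot(\varphi \ucf \lnot\psi) \equiv \lnot(\varphi \cf \lnot\psi) \equiv \varphi \mcf \psi$, which is exactly the claimed equivalence. Because Proposition~\ref{prop:equiv_ccf_mcf} is stated for all pairs of properties, the substitution of $\lnot\psi$ is unproblematic, and because the equivalence it provides is semantic, negating both sides preserves it.

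The only genuine content beyond this bookkeeping is justifying the second duality law $\varphi \ucf \psi \equiv \lnot(\varphi \emcf \lnot\psi)$ for the explicit semantics of Definitions~\ref{def:uwould_cf} and~\ref{def:emight_cf}, rather than merely as the motivating principle used to derive the operator. I would verify it by a direct computation: negating the two-disjunct body of `Universal Would' evaluated at consequent $\lnot\psi$, pushing the negation through the quantifier prefix, and simplifying $\lnot(W_3 \models \varphi \rightarrow \lnot\psi)$ to $W_3 \models \varphi \land \psi$; the resulting formula coincides with the body of Definition~\ref{def:emight_cf}, with the leading existential witness rendering the `no $\varphi$-world' disjunct redundant. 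This step is routine but is the one place where a sign slip in negating the nested quantifier alternation could go wrong, so it is where I would be most careful; everything else is a mechanical application of the already-established total-order equivalence for `Would'.
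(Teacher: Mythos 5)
Your proposal is correct and follows essentially the same route as the paper's own proof, which likewise derives the corollary from the two duality laws $\varphi \cf \psi \equiv \lnot(\varphi \mcf \lnot\psi)$ and $\varphi \ucf \psi \equiv \lnot(\varphi \emcf \lnot\psi)$ together with Proposition~\ref{prop:equiv_ccf_mcf} by substitution. The only difference is that you additionally verify the second duality law from the explicit semantics of Definitions~\ref{def:uwould_cf} and~\ref{def:emight_cf} (which checks out), whereas the paper treats it as holding by construction of the `Existential Might' operator.
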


Subfigure~\ref{subfig:might} illustrates how this definition captures the intended meaning of `Might' counterfactuals on non-total similarity relations. To illustrate the problem with Lewis' original semantics for `Might': The complement of some satisfying antecedent $\varphi_{Might}$ for a classic `Might' counterfactual is colored in dark violet. For satisfaction it is required that no matter where we leave this area, for any world there exists an equally close world satisfying $\psi$. This means for example, that we can include the worlds furthest to the west and north-west in $\varphi_{Might}$. This is quite strict, our `Existential Might' counterfactual illustrated in bright violet instead allows the closest worlds on a chain to not satisfy $\psi$, as long as on some chain there exists a closest world satisfying $\psi$.

\subsection{Minimal Counterfactuals}\label{sec:mincf}

When reasoning about causation, we are most often interested in some notion of minimality to characterize the minimal changes necessary to avoid a given effect~\cite{HalpernP01,Halpern15,Harbecke21,CoenenDFFHHMS22,CoenenFFHMS22}. From a counterfactual point of view, minimality formulates an additional condition on the antecedent $\varphi$ such that the property defines the largest set possible. The question of whether to term this notion minimality or maximality is a matter of perspective. On the one hand, the criterion maximizes the language of $\varphi$, but on the other hand, this in fact minimizes the amount of changes necessary to ensure $\varphi$, since more worlds in $\varphi$ mean more opportunities to move into $\varphi$ earlier when moving along the similarity relation. Since for causation this abstract criterion is usually called minimality, we adopt the same name here.

\begin{definition}[Minimal Counterfactuals]\label{def:mini}
Given a world $W$ and a counterfactual conditional $\rightsquigarrow \; \in \{ \cf,\mcf,\ucf,\emcf \} $, the minimal counterfactual conditional $\varphi \rightsquigarrow_{\!_\mathit{min}} \psi$ is true iff all of the following holds:
$$W \models \varphi \rightsquigarrow \psi ~(\labelText{1}{lefteq:mincf}) \text{, and }\nexists \; \varphi' \ldot (\varphi \rightarrow \varphi') \land (\varphi' \not\rightarrow \varphi)  \land W \models \varphi' \rightsquigarrow \psi~(\labelText{2}{righteq:mincf}) \enspace .$$
\end{definition}

\begin{example}
    Minimal counterfactuals ensure that the antecedent does not overspecify the changes necessary to get to the consequent. For instance, consider that both of the following statements hold:
    \begin{align*}
        t \models u \land \LTLnext u \ucf \LTLnext (\LTLnext t), \text{ and } t \models \LTLnext u \ucf \LTLnext (\LTLnext t) \enspace .
    \end{align*}
    While the antecedent in the lower statement is more concise because $t$ already has $u$ at the first position, neither is a minimal antecedent, because minimality additionally ensures that all possible antecedents are included, we have:
    $$ t \models (\LTLnext u) \lor \LTLnext (\LTLnext t) \ucfmin \LTLnext (\LTLnext t) \enspace .$$
    For a short argument, consider why there cannot be a trace in the minimal antecedent that does not satisfy $(\LTLnext u) \lor \LTLnext (\LTLnext t)$. Such a trace would have no upwards movement at the second position, and all traces closer to $t$ do neither. Since all traces start at the bottom floor, we know that none of the traces in between satisfy $\LTLnext (\LTLnext t)$. 
\end{example}

While it may seem odd that the consequent can be a necessary part of the antecedent like in the above example, we note that it is common for the effect to be in the set of its counterfactual causes~\cite{HalpernP01,Halpern15,Bochman18}. We believe there is a quite direct connection between minimal counterfactuals, and conjunctive and disjunctive causes from the causal modeling literature. We expand on this later in Section~\ref{sec:ex}.

As one can see in Definition~\ref{def:mini}, minimality is a second-order statement that quantifies over properties. Hence, the minimal counterfactual conditionals $\cfmin$, $\mcfmin$, $\ucfmin$, and $\emcfmin$ are essentially guarded second-order quantifiers. However, one of our main insights is that this second-order quantification inherent to these operators can be eliminated such that for any minimal counterfactual conditional there exists an equisatisfiable formula without quantification over properties. We prove this in the following lemma and will use this in the following section for concrete decidability results. 

\begin{lemma}\label{lem:qe_cf}
    For a minimal counterfactual statement $\varphi \rightsquigarrow_{\!_\mathit{min}} \! \psi$, with $\rightsquigarrow \; \in \{ \cf,\mcf,\ucf,\emcf \} $, there exists a parameterized formula $\varphi_{\mathit{FO}}(W)$, with $W \in \mathcal{U}$, that quantifies only over worlds and not properties such that for all worlds $W' \in \mathcal{U}$:
    $W' \models \varphi \rightsquigarrow_{\!_\mathit{min}} \psi$ iff $\varphi_{\mathit{FO}}(W')$ is valid.
\end{lemma}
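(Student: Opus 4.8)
The plan is to eliminate the second-order quantifier over properties that appears in Condition~\nameref{righteq:mincf} of Definition~\ref{def:mini}, while noting that Condition~\nameref{lefteq:mincf} is unproblematic: reading off Definitions~\ref{def:would_cf}--\ref{def:emight_cf}, the statement $W \models \varphi \rightsquigarrow \psi$ is already a first-order formula over worlds that uses only $\leq_W$ and the satisfaction predicates for $\varphi$ and $\psi$ as atoms. Hence it suffices to rewrite the clause $\exists \varphi' \ldot \mathcal{L}(\varphi) \subsetneq \mathcal{L}(\varphi') \land W \models \varphi' \rightsquigarrow \psi$ as a first-order formula over worlds. The unifying idea is that, for each operator, the valid antecedents admit a \emph{canonical maximal} representative determined by a constant number of witness worlds: I will show that every property $\varphi'$ with $W \models \varphi' \rightsquigarrow \psi$ is contained in such a canonical set $S_{\bar w}$, and that $S_{\bar w}$ is itself a valid antecedent. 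Then the existence of a strictly larger valid antecedent collapses to the first-order statement that some $S_{\bar w}$ strictly contains $\mathcal{L}(\varphi)$, i.e.\ $\forall W' \ldot (W' \models \varphi \Rightarrow W' \in S_{\bar w}) \land \exists W_0 \ldot (W_0 \in S_{\bar w} \land W_0 \nmodels \varphi)$, where membership in $S_{\bar w}$ is spelled out by a first-order formula.

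I would carry out this construction explicitly for three of the four operators. For `Would', any nonempty valid antecedent satisfies the right disjunct of Definition~\ref{def:would_cf}, so it has a threshold world $w$ with $w \models \psi$; the largest antecedent admitting $w$ as a threshold is $S_w = \{W' : W' \not\leq_W w \lor W' \models \psi\}$, and one checks that $S_w$ is valid and that every valid antecedent lies inside some such $S_w$. For `Existential Might' the same scheme works with the witness world $w$ and $S_w = \{W' : W' \not\leq_W w \lor \exists W_3 \ldot W_3 \leq_W W' \land W_3 \models \psi\}$. For `Might' the situation is even cleaner: the valid antecedents have a unique greatest element $S = \{W' : \exists W_2 \ldot W_2 \leq_W W' \land W_2 \models \psi\}$, so non-minimality is simply $S \supsetneq \mathcal{L}(\varphi)$. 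In each case membership in the canonical set is first-order in $w$, $\leq_W$ and $\psi$, so the property quantifier is replaced by one (or zero) world quantifiers.

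The hard part will be `Universal Would'. Here validity is a genuinely global, second-order condition: every antecedent world must possess its \emph{own} clean threshold, and whether a world is a clean threshold depends on the entire antecedent, so the valid antecedents are not closed under union and possess no unique greatest element. Worse, because we deliberately reject the Limit Assumption, the similarity relation may contain infinite descending chains, so one cannot reduce non-minimality to the addition of a single $\leq_W$-minimal world. My plan to overcome this is to restrict attention to enlargements by worlds satisfying $\psi$, which is harmless because adding a $\psi$-world never destroys the cleanliness of an existing threshold; I would then characterize the largest $\psi$-enlargement of $\mathcal{L}(\varphi)$ through the set of worlds that lie, along $\leq_W$, above a threshold that is clean with respect to $\varphi$, give this set a first-order (fixpoint-free) description using $\leq_W$, $\varphi$ and $\psi$, and prove that it strictly extends $\mathcal{L}(\varphi)$ exactly when some valid $\varphi' \supsetneq \mathcal{L}(\varphi)$ exists.

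Finally, $\varphi_{\mathit{FO}}(W)$ is the conjunction of the first-order rendering of Condition~\nameref{lefteq:mincf} with the negation of the first-order rendering of ``there is a strictly larger valid antecedent''. One subtlety I would make explicit is that this equivalence presupposes that the canonical maximal antecedents are visible to the property quantifier of Definition~\ref{def:mini}; since each $S_{\bar w}$ is defined from a world parameter together with $\leq_W$ and $\psi$, it is expressible whenever $\mathcal{L}$ is closed under the Boolean and relational operations used (in particular for the \qptl instantiation of the following section, and at the semantic level in general), which is precisely what makes the reduction an equivalence rather than merely one implication.
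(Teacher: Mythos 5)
Your overall strategy---eliminating the property quantifier by exhibiting canonical maximal valid antecedents $S_{\bar w}$ determined by finitely many witness worlds, so that non-minimality collapses to ``some $S_{\bar w}$ strictly contains $\mathcal{L}(\varphi)$''---is a genuinely different route from the paper's proof, which instead writes down, for each operator, a direct first-order description of the worlds that could still be added to $\varphi$. For $\mcf$ your unique greatest valid antecedent $S=\{W'\mid\exists W_2\ldot W_2\leq_W W'\land W_2\models\psi\}$ is provably what the paper's formula for $\mcfmin$ characterizes, and your $S_w$ constructions for $\cf$ and $\emcf$ check out (modulo the degenerate cases of unsatisfiable $\psi$ and the treatment of worlds outside $\mathcal{U}$, which the paper handles explicitly and you omit). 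You also rightly flag that the backward implication needs $S_{\bar w}$ to be definable in $\mathcal{L}$; the paper leaves this implicit.

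The gap is in the $\ucf$ case, which you correctly single out as the hard one but then attack with a claim that is false: restricting to enlargements by $\psi$-worlds is \emph{not} harmless. Take $\mathcal{U}=\{W,a,b\}$ with $W\leq_W a\leq_W b$, let $\psi$ hold only at $a$, and let $\varphi$ characterize $\{a\}$. Then $W\models\varphi\ucf\psi$, and $\{a,b\}$ is a valid antecedent strictly containing $\mathcal{L}(\varphi)$ (with clean threshold $a$ for both of its worlds), so $\varphi$ is not minimal---yet $b\nmodels\psi$ and the only $\psi$-enlargement of $\mathcal{L}(\varphi)$ is $\mathcal{L}(\varphi)$ itself, so your test would wrongly certify minimality. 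What legitimizes adding $b$ is that it lies farther from $W$ than a clean threshold; the paper's formula for $\ucfmin$ captures exactly this with the conjunct forcing every $W'''$ with $W^c\leq_W W'''$ to satisfy $\varphi$, and it additionally handles $\psi$-worlds incomparable to $W^c$, which can be adjoined as fresh thresholds of their own. Your sketch does mention ``worlds above a clean threshold'', but that set is not a $\psi$-enlargement, so the two halves of your plan contradict each other, and the decisive step---proving that your first-order set strictly extends $\mathcal{L}(\varphi)$ if and only if some valid $\varphi'\supsetneq\mathcal{L}(\varphi)$ exists, without a fixpoint iteration of ``new threshold yields new addable worlds yields new thresholds''---is precisely the content of the lemma for this operator and is left unproved.
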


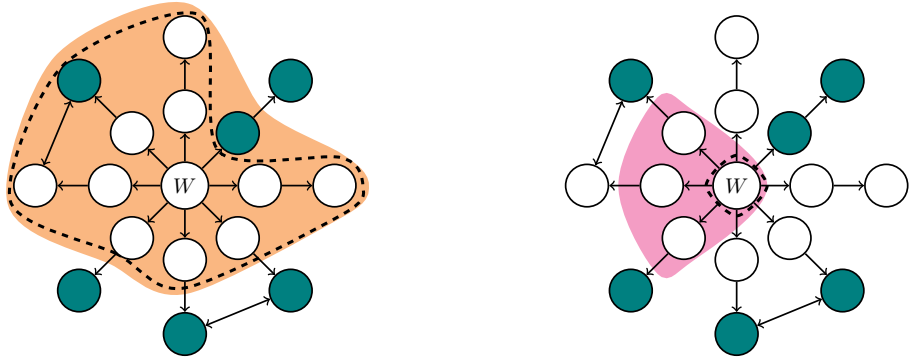
\begin{figure}[t]
    \centering
    \begin{subfigure}[b]{0.5\linewidth}
    \centering
    \scalebox{0.8}{
    \begin{tikzpicture}[node distance=3.5em,draw,thick]
    \node[draw,circle,minimum size=2em,fill=white](c) at (0,0){$W$};
    \node[draw,circle,minimum size=2em,above of=c,fill=white](n){};
    \node[draw,circle,minimum size=2em,below of=c,fill=white](s){};
    \node[draw,circle,minimum size=2em,left of=c,fill=white](w){};
    \node[draw,circle,minimum size=2em,right of=c,fill=white](e){};
    \node[draw,circle,minimum size=2em,below left of=c,fill=white](sw){};
    \node[draw,circle,minimum size=2em,below right of=c,fill=white](se){};
    \node[draw,circle,minimum size=2em,above left of=c,fill=white](nw){};
    \node[draw,circle,minimum size=2em,above right of=c,fill=teal](ne){};
    \node[draw,circle,minimum size=2em,above of=n,fill=white](nn){};
    \node[draw,circle,minimum size=2em,below of=s,fill=teal](ss){};
    \node[draw,circle,minimum size=2em,left of=w,fill=white](ww){};
    \node[draw,circle,minimum size=2em,right of=e,fill=white](ee){};
    \node[draw,circle,minimum size=2em,below left of=sw,fill=teal](sw2){};
    \node[draw,circle,minimum size=2em,below right of=se,fill=teal](se2){};
    \node[draw,circle,minimum size=2em,above left of=nw,fill=teal](nw2){};
    \node[draw,circle,minimum size=2em,above right of=ne,fill=teal](ne2){};
    \path[->,draw](c) -- (n);
    \path[->,draw](c) -- (s);
    \path[->,draw](c) -- (w);
    \path[->,draw](c) -- (e);
    \path[->,draw](c) -- (nw);
    \path[->,draw](c) -- (ne);
    \path[->,draw](c) -- (sw);
    \path[->,draw](c) -- (se);
    \path[->,draw](n) -- (nn);
    \path[->,draw](s) -- (ss);
    \path[->,draw](w) -- (ww);
    \path[->,draw](e) -- (ee);
    \path[->,draw](nw) -- (nw2);
    \path[->,draw](ne) -- (ne2);
    \path[->,draw](sw) -- (sw2);
    \path[->,draw](se) -- (se2);
    \path[<->,draw](ss) -- (se2);
    \path[<->,draw](ww) -- (nw2);
    \begin{scope}[on background layer]
    \draw [line width=4mm, Apricot, fill=Apricot] plot [smooth cycle] coordinates {(s.south) (sw.south west)(ww.south west)(nw2.north west)(nn.north)(ne.north east)(ee.north)(ee.east)(ee.south)};
    \draw [line width=0.5mm, draw,dashed] plot [smooth cycle] coordinates {($(s.south)  + (0,-0.1)$) (sw.south west)($(ww.south west)  + (-0.15,0)$)(nw2.north west)($(nn.north)  + (0.25,0)$)($(ne.south west)  + (-0.05,-0.05)$)($(ee.north)  + (0.1,0.1)$)($(ee.east)  + (0.1,0)$)($(ee.south)  + (0.15,-0.1)$)};
    \end{scope}
    \end{tikzpicture}}
    \caption{`Universal Would' and its minimal version.}\label{subfig:wouldmin}
    \end{subfigure}%
    \begin{subfigure}[b]{0.5\linewidth}
    \centering
    \scalebox{0.8}{
    \begin{tikzpicture}[node distance=3.5em,draw,thick]
    \node[draw,circle,minimum size=2em,fill=white](c) at (0,0){$W$};
    \node[draw,circle,minimum size=2em,above of=c,fill=white](n){};
    \node[draw,circle,minimum size=2em,below of=c,fill=white](s){};
    \node[draw,circle,minimum size=2em,left of=c,fill=white](w){};
    \node[draw,circle,minimum size=2em,right of=c,fill=white](e){};
    \node[draw,circle,minimum size=2em,below left of=c,fill=white](sw){};
    \node[draw,circle,minimum size=2em,below right of=c,fill=white](se){};
    \node[draw,circle,minimum size=2em,above left of=c,fill=white](nw){};
    \node[draw,circle,minimum size=2em,above right of=c,fill=teal](ne){};
    \node[draw,circle,minimum size=2em,above of=n,fill=white](nn){};
    \node[draw,circle,minimum size=2em,below of=s,fill=teal](ss){};
    \node[draw,circle,minimum size=2em,left of=w,fill=white](ww){};
    \node[draw,circle,minimum size=2em,right of=e,fill=white](ee){};
    \node[draw,circle,minimum size=2em,below left of=sw,fill=teal](sw2){};
    \node[draw,circle,minimum size=2em,below right of=se,fill=teal](se2){};
    \node[draw,circle,minimum size=2em,above left of=nw,fill=teal](nw2){};
    \node[draw,circle,minimum size=2em,above right of=ne,fill=teal](ne2){};
    \path[->,draw](c) -- (n);
    \path[->,draw](c) -- (s);
    \path[->,draw](c) -- (w);
    \path[->,draw](c) -- (e);
    \path[->,draw](c) -- (nw);
    \path[->,draw](c) -- (ne);
    \path[->,draw](c) -- (sw);
    \path[->,draw](c) -- (se);
    \path[->,draw](n) -- (nn);
    \path[->,draw](s) -- (ss);
    \path[->,draw](w) -- node(wm){} (ww);
    \path[->,draw](e) -- (ee);
    \path[->,draw](nw) -- node(nwm){} (nw2);
    \path[->,draw](ne) -- (ne2);
    \path[->,draw](sw) -- node(swm){} (sw2);
    \path[->,draw](se) -- (se2);
    \path[<->,draw](ss) -- (se2);
    \path[<->,draw](ww) -- (nw2);
    \begin{scope}[on background layer]
    \draw [line width=2mm, Lavender, fill=Lavender] plot [smooth cycle,tension=0.6] coordinates {(nw.north)(c.east)(sw.south)(swm.center)(wm.center)(nwm.center)};
    \draw [line width=0.5mm, draw, dashed] plot [smooth cycle,tension=0.5] coordinates {($(c.north) + (0,0.1)$)($(c.east)  + (0.1,0)$)($(c.south)  + (-0,-0.1)$)($(c.west)  + (-0.1,-0)$)};
    \end{scope}
    \end{tikzpicture}}
    \caption{`Existential Might' and its minimal version.}\label{subfig:mightmin}
    \end{subfigure}
    \caption{The set of worlds characterized by non-minimal counterfactuals (colored areas) compared to the set of worlds characterized by minimal counterfactuals (dashed line) in the same setting as in Figure~\ref{fig:cf_preorders}.}
    \label{fig:cf_min}
\end{figure}

To illustrate minimal counterfactuals consider the abstract scenario in Figure~\ref{fig:cf_min}. In Subfigure~\ref{subfig:wouldmin} colored areas correspond to the complements of a satisfying antecedent of a `Universal Would' counterfactual with reference world $W$. However, this antecedent is not minimal, because the complement of the dashed area is equally part of a `Universal Would' counterfactual satisfied by $W$. The illustration sums up the following intuition: The worlds encountered by leaving the dashed area that satisfy $\psi$ are now overall closer to $W$ than before. The same idea holds for the more minimal antecedent of the `Existential Might' counterfactual illustrated with the dashed area in Subfigure~\ref{subfig:mightmin}. However, since for the `Existential Might' counterfactual it is enough to find just one closest antecedent-world satisfying $\psi$, we can actually remove all worlds except $W$, because there is a $\psi$-world directly adjacent to it.

\section{Temporal Counterfactuals}\label{sec:decidability}

We now leverage the semantic insights garnered on counterfactuals in the previous section to design a logic for expressing notions of counterfactual causality in the temporal domain. For this, it is not enough to solely evaluate the truth values of counterfactual conditionals or even their minimal versions, since causality commonly places additional conditions for a causal relationship to hold. For instance, cause and effect may have to be satisfied in the reference world~\cite{HalpernP01,Halpern15,CoenenFFHMS22}. The syntax and semantics of our logic are introduced in the following section, where we also establish decidability of satisfiability and trace checking. In Section~\ref{sec:ex}, we close by illustrating how our logic can express definitions of causality proposed in previous literature.

\subsection{QPTL With Counterfactuals}

We consider a logic that builds Boolean combinations of \qptl formulas and the proposed counterfactual conditionals relating \qptl statements. Nesting of counterfactuals is not allowed, although may be interesting to explore in future work to study temporally structured interventions. We call our logic \qptlc, for \emph{\qptl with counterfactuals}, and its formulas are built according to the following grammar:
\begin{align*}
\xi \Coloneqq ~&\psi \cf \psi \mid \psi \mcf \psi \mid \psi \ucf \psi \mid \psi \emcf \psi \mid \psi \cfmin \psi \mid\\ & \psi \mcfmin \psi \mid \psi \ucfmin \psi \mid \psi \emcfmin \psi \mid
\xi \land \xi \mid \lnot \xi \mid \psi \enspace ,
\end{align*}
where $\psi$ is a \qptl formula. The semantics of the counterfactual conditionals are as discussed in the previous section where worlds are now infinite traces and the satisfaction relation is as for \qptl, and the semantics of the \qptl-formulas are as described in Section~\ref{sec:prelim}.

Now that we have fixed the syntax and semantics of our logic \qptlc, we show that important decision problems can be automatically decided. Note that previous results on Lewis' original two counterfactuals and related works on causal analysis either consider theories with a finite set of models~\cite{Lewis71}, endorse the problematic limit assumption \cite{CoenenFFHMS22}, or both~\cite{Halpern15,AleksandrowiczC17}. In this sense, our results extend previous work in several orthogonal directions.

We start by considering the satisfiability problem of \qptlc. Satisfiability denotes the problem of deciding whether there exists a trace that satisfies a given formula. 
Deciding the satisfiability of counterfactuals of course depends on whether the similarity relation can be expressed in a decidable logic. We, therefore, show satisfiability of minimal counterfactuals with respect to the following family of \qptl-expressible similarity relations:
\begin{align*}
        \leq^{\texttt{QPTL}} = \{\, R \subseteq \big((2^\mathit{AP})^\omega\big)^3 \mid \, &\exists \text{ \qptl-formula } \varphi_R \text{ over } \mathit{AP}_{\{\pi_1,\pi_2,\pi_3\}} \ldot 
        \forall t_1,t_2,t_3 \in (2^\mathit{AP})^\omega \ldot \\& R(t_1,t_2,t_3) \text{ iff } t_1 \cupdot t_2 \cupdot t_3 \models \varphi_R
        \} \enspace,
\end{align*}

where $\mathit{AP}_P = \{a_\pi \mid a \in \mathit{AP} \land \pi \in P\}$ indexes the atomic propositions with trace variables and $ t_1 \cupdot t_2 \cupdot t_3$ denotes the disjoint fusion of the three traces: We have for all positions $n \in \mathbb{N}$ that the following holds: $(t_1 \cupdot t_2 \cupdot t_3)[n] = \{a_{\pi_1} \mid a \in t_1[n] \} \cup \{a_{\pi_2} \mid a \in t_2[n] \} \cup \{a_{\pi_3} \mid a \in t_3[n] \}$. Note that the similarity relations $\leq^{\qptl}_t\!\!(X)$ introduced in Example~\ref{ex:distance} are subsumed by this family, and also the distance metric considered by Coenen et al.~\cite{CoenenFFHMS22}. We consider three-place relations here because the first place models the $W$-subscript of the relation $\leq_W$.

Satisfiability of formulas in \qptlc can then be decided with an idea roughly similar to the reduction of \ltl satisfiability to \ltl model checking proposed by Rozier and Vardi~\cite{RozierV10}. However, since there is no corresponding trace-based logic that can express the semantics of counterfactual conditionals, we instead encode the problem as a \hyperqptl model-checking problem over a model containing all possible traces over our set of atomic propositions $\mathit{AP}$.  In the end, since \hyperqptl model checking is decidable~\cite{Rabe16}, so is the satisfiability of \qptlc-formulas. This approach is interesting from a theoretical point of view, because \hyperqptl satisfiability checking itself is undecidable. We leverage the fact that models of \qptlc formulas are traces and not sets of traces. While the logic does have relational semantics, these are effectively guarded through the counterfactual conditionals, allowing us to encode into model checking and avoid the undecidable \hyperqptl satisfiability problem.

\begin{theorem}\label{thm:sat}
For any \qptlc formula $\varphi$, it is decidable to check  whether $\varphi$ is satisfiable when the similarity relation is from the family $\leq^{\texttt{QPTL}}$ and the universe is defined by a \qptl formula.
\end{theorem}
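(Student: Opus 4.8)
The plan is to reduce the satisfiability of a \qptlc formula $\varphi$ to the model-checking problem of \hyperqptl over the universal model $M_{\mathit{all}}$ that generates every trace in $(2^{\mathit{AP}})^\omega$, in the spirit of the satisfiability-to-model-checking reduction of Rozier and Vardi~\cite{RozierV10}. Since \hyperqptl model checking is decidable~\cite{Rabe16} and $M_{\mathit{all}}$ is a finite Kripke structure, this yields the claim. The first step is to dispatch the four minimal operators: by Lemma~\ref{lem:qe_cf}, every minimal counterfactual conditional occurring in $\varphi$ can be replaced by an equisatisfiable parameterized formula $\varphi_{\mathit{FO}}(W)$ that quantifies only over worlds (traces), so we are left with formulas whose only quantification is over traces and over propositions (the latter inherited from the \qptl subformulas).

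The central step is a syntactic translation $(\cdot)^\sharp$ that turns $\varphi$ into a \hyperqptl formula $\varphi^\sharp$ with a single free trace variable $\pi$ representing the reference world in which the counterfactual is evaluated. Boolean combinations and plain \qptl subformulas translate homomorphically, with each atomic proposition in a \qptl subformula indexed by $\pi$. For the counterfactual operators, I read the translation directly off Definitions~\ref{def:would_cf}--\ref{def:emight_cf}: every first-order quantifier $\forall W_i$ / $\exists W_i$ over worlds becomes a \hyperqptl trace quantifier $\forall \pi_i$ / $\exists \pi_i$; every satisfaction check $W_i \models \chi$ becomes the \qptl subformula $\chi$ with its atoms indexed by $\pi_i$; and every similarity atom $W_j \leq_W W_k$ becomes the \qptl relation $\varphi_R$ witnessing membership in $\leq^{\texttt{QPTL}}$, evaluated on the fusion that places the reference trace $\pi$ in the first coordinate, $\pi_j$ in the second, and $\pi_k$ in the third. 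Because the universe $\mathcal{U}$ is defined by a \qptl formula $\varphi_{\mathcal{U}}$, I relativize each trace quantifier to $\varphi_{\mathcal{U}}$ in the usual way, guarding existentials by $\land$ and universals by $\rightarrow$ with the copy of $\varphi_{\mathcal{U}}$ indexed by the bound variable. The resulting $\varphi^\sharp$ is a genuine \hyperqptl formula whose every trace quantifier is guarded inside a translated counterfactual, so the outermost object remains a property of the single trace $\pi$.

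To decide satisfiability I then check whether $M_{\mathit{all}} \models \exists \pi \ldot \varphi_{\mathcal{U}}[\pi] \land \varphi^\sharp$, where $\varphi_{\mathcal{U}}[\pi]$ denotes $\varphi_{\mathcal{U}}$ with atoms indexed by $\pi$. Since $M_{\mathit{all}}$ contains every trace, the outer quantifier ranges over all candidate reference worlds in $\mathcal{U}$, and each guarded inner quantifier ranges over all worlds of $\mathcal{U}$; hence the formula holds iff some trace of the universe satisfies $\varphi$ under the given similarity relation. As the constructed formula is a closed \hyperqptl formula, decidability of \hyperqptl model checking concludes the argument. Crucially, this works precisely because models of \qptlc are single traces and the relational content is guarded by the counterfactuals, so we never need the undecidable \hyperqptl \emph{satisfiability} problem.

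I expect the main obstacle to be establishing correctness of the translation for the two operators whose semantics reject the Limit Assumption, i.e.\ faithfully capturing the three-deep quantifier alternations of Definitions~\ref{def:uwould_cf} and~\ref{def:emight_cf} (the `threshold world' and `closer witness' patterns) after relativization to $\varphi_{\mathcal{U}}$ and after instantiating $\leq_W$ by $\varphi_R$ with the reference trace pinned to the first coordinate. A second, more delicate point is ensuring that the proposition quantifiers hidden in the \qptl subformulas and the trace quantifiers introduced by the translation do not interfere: each quantified proposition of a satisfaction check $W_i \models \chi$ must remain tied to its trace $\pi_i$, which I handle by renaming the bound propositions of distinct subformulas apart and indexing their atoms consistently by the corresponding trace variable. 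The minimal operators add no new difficulty at this stage, since their second-order content has already been removed by Lemma~\ref{lem:qe_cf} before $(\cdot)^\sharp$ is applied.
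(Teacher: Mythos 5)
Your proposal is correct and follows essentially the same route as the paper's proof: eliminate the second-order content of the minimal operators via Lemma~\ref{lem:qe_cf}, translate world quantifiers to trace quantifiers, satisfaction checks to $\pi_i$-indexed \qptl subformulas, and similarity atoms to $\varphi_R$ on the trace fusion, relativize everything to $\varphi_{\mathcal{U}}$, and model-check the resulting $\exists\pi$-formula on the all-traces model to exploit decidability of \hyperqptl model checking. The only cosmetic difference is that the paper explicitly brings the translated formula into prenex normal form (justified because no quantifier occurs under a temporal operator) to match the alternation-free \hyperqptl fragment it uses, whereas you leave the trace quantifiers guarded in place; this is a mechanical rearrangement you already have the ingredients for.
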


Since our logic contains negation, this also covers the problem of checking validity of a formula in the usual way, as a formula is valid if and only if its negation is not satisfiable. Similarly, because our logic subsumes \qptl, we can also model-check \qptlc-properties on any system that can be expressed in \qptl, which covers many practically relevant finite-state systems. This can be done by checking satisfiability of the conjunction of the system formula and the negation of the property formula (and possibly choosing the system as the universe).

Lastly, we consider the problem of checking whether some \qptlc-formula holds on a given trace. Since we need a finite representation of the infinite trace to feasibly compute this truth value, we consider lasso-shaped traces, i.e., traces of the form $t = t_0 \ldots t_i(t_{i+1} \ldots t_j)^\omega$ that ultimately repeat some loop in the infinite $\omega$-part. The proof of the theorem then follows a similar idea as the satisfiability proof, except that we do not search for any satisfying trace and instead fix the corresponding trace variable to the lasso-shaped trace.

\begin{theorem}\label{thm:trace_check}
For a lasso-shaped trace $t$ and a \qptlc formula $\varphi$, it is decidable to check  whether $t \models \varphi $ when the similarity relation is from the family $\leq^{\texttt{QPTL}}$ and the universe is defined by a \qptl formula.
\end{theorem}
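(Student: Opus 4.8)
The plan is to reduce the trace-checking problem to the already-established decidability of satisfiability (Theorem~\ref{thm:sat}), or more directly to mirror its proof, since the excerpt explicitly states that the trace-checking proof ``follows a similar idea as the satisfiability proof, except that we do not search for any satisfying trace and instead fix the corresponding trace variable to the lasso-shaped trace.'' Concretely, recall that the satisfiability proof encodes the semantics of each counterfactual conditional appearing in the \qptlc formula $\varphi$ into a \hyperqptl model-checking problem over the universal model $U$ containing \emph{all} traces over $\mathit{AP}$. Each counterfactual operator $\rightsquigarrow \; \in \{\cf,\mcf,\ucf,\emcf\}$ has semantics given by first-order quantification over worlds (Definitions~\ref{def:would_cf}--\ref{def:emight_cf}), which translates into trace quantification in \hyperqptl, with the similarity relation $\leq_W$ supplied by the guard formula $\varphi_R$ from the family $\leq^{\texttt{QPTL}}$. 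The minimal variants $\rightsquigarrow_{\!_\mathit{min}}$ are handled by first invoking Lemma~\ref{lem:qe_cf} to replace each minimal conditional with an equisatisfiable first-order formula $\varphi_{\mathit{FO}}(W)$ that quantifies only over worlds, so that they too become expressible as \hyperqptl trace quantification. The outer Boolean structure of $\varphi$ is preserved by the Boolean connectives of \hyperqptl.

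The single modification for trace-checking is in the treatment of the reference world. In the satisfiability encoding, the reference trace $W$ (the subscript of $\leq_W$, corresponding to the first place of the three-place relation) is existentially quantified at the outermost level, so that the \hyperqptl model-checking query asks whether \emph{some} trace of $U$ can serve as a satisfying reference world. For trace-checking, I would instead fix this reference trace variable to the given lasso-shaped trace $t$. Since $t$ is lasso-shaped, it has a finite representation $t = t_0\ldots t_i(t_{i+1}\ldots t_j)^\omega$, and a single ultimately periodic trace is itself $\omega$-regular; it can therefore be pinned down within the \hyperqptl model-checking instance, either by constructing a model whose only behaviour on the distinguished reference variable is $t$, or by adding a \qptl guard that forces the reference trace to equal $t$ (an ultimately periodic word is definable by a \qptl formula). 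All the remaining trace quantification --- over the worlds $W_1, W_2, W_3$ ranging across the universe $\mathcal{U}$, which is itself defined by a \qptl formula --- stays as it was, guarded by $\varphi_R$. The resulting query is a closed \hyperqptl model-checking problem over $U$, which is decidable by Rabe~\cite{Rabe16}, and by the correctness of the encoding it answers exactly whether $t \models \varphi$.

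The main obstacle, as in the satisfiability case, is ensuring that the \hyperqptl encoding faithfully captures the quantifier structure of the counterfactual semantics --- in particular the nested, non-total quantification over $W_1, W_2, W_3$ in the `Universal Would' and `Existential Might' definitions, and the guarded second-order elimination from Lemma~\ref{lem:qe_cf} for the minimal operators. This obstacle, however, is essentially discharged once Theorem~\ref{thm:sat} is established, since the encoding machinery is shared. The genuinely new point to verify is merely that fixing the reference variable to a lasso-shaped $t$ preserves correctness and decidability: one must confirm that a single ultimately periodic trace is \qptl-definable (so it can be injected as a guard without leaving the decidable fragment) and that restricting the outermost reference quantifier to this fixed trace does not disturb the guard $\varphi_R$ or the universe constraint on the inner world-quantifiers. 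I expect this to be routine, so the bulk of the argument simply reuses the satisfiability construction with the outermost existential over the reference world replaced by the constant $t$.
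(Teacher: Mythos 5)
Your proposal is correct and matches the paper's own proof: the paper likewise reuses the satisfiability encoding and fixes the reference trace by writing a \qptl formula $\varphi^t$ characterizing the singleton $\{t\}$ (possible since $t$ is ultimately periodic), then model checks $\forall \pi \ldot \varphi^t_\pi \rightarrow \varphi^\mathit{full}_\mathit{FO}(\pi)$ on the universal model. The only cosmetic difference is that the paper realizes the ``fixing'' via a universally quantified guard rather than replacing the outermost existential, which is one of the two equivalent options you already list.
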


Theorem~\ref{thm:sat} and Theorem~\ref{thm:trace_check} show that it is possible to build tools that automatically check whether causal relationships hold in a given finite-state system, or whether causal relationships expressed in our logic are present on a given trace. While an exact complexity analysis is out of scope of this paper, \qptl has non-elementary complexity that scales in the number of propositional quantifier alternations. The complexity of our decision procedure for \qptlc as of now additionally scales in the number of counterfactual conditionals, since these introduce trace quantifier alternations and the complexity of \hyperqptl model checking scales with this number. We believe there may be a more efficient encoding, as there is no dependence between the trace variables resulting from different counterfactual conditionals and, hence, quantifiers can be rearranged to avoid alternations. However, the purpose of this work is to explore how expressive our temporal logic with counterfactuals can be while retaining decidability. In practice, it may in fact be more feasible to consider fragments of \qptl that have a more practical complexity, such as \ltl. There exist efficient model checkers~\cite{BeutnerF23a}  for its counterpart \texttt{HyperLTL}~\cite{ClarksonFKMRS14} and, since recently, also for \texttt{HyperQPTL}~\cite{BeutnerF23b}, such that implementing a model checker for \qptlc or a fragment modulo \texttt{LTL} using our outlined encoding is feasible.

\subsection{Examples}\label{sec:ex}

We close by illustrating how \qptlc can be used to express several notions of counterfactual causality proposed in previous literature. We start with Lewis' account of counterfactual causation to illustrate how a basic, but limited, notion of causation can be expressed solely with counterfactuals. We then outline at the example of Halpern's (non-temporal) actual causality the important role of proper modeling in causal analysis, and lastly we show how previous work of Coenen et al.\ on extending Halpern's actual causality to temporal properties is subsumed by our work on temporal-counterfactual reasoning with \qptlc.

\paragraph{Lewis' Counterfactual Causation~\cite{Lewis73b}.}

Based on counterfactual conditionals, Lewis introduced a definition of causality in order to capture Hume's informal account that ``we may define a cause to be an object followed by another, where, if the first object had not been, the second never had existed''~\cite{Hume1748}. Lewis only draws causal relationships between two events, which are rather informally defined objects, but which in previous literature on traces are commonly interpreted to mean the value of an atomic proposition at a certain position~\cite{BeerBCOT12,Leitner-FischerL13,CoenenDFFHHMS22}. Lewis definition stipulates that an event $c$ is the cause for an event $e$ if the following condition holds:
$(c \cf e) \land  (\lnot c \cf \lnot e)$.
The intuition behind the formula is that either the cause $c$ and $e$ appear together (and hence the left conjunct is also satisfied), and then moving to any closest $\lnot c$-world is sufficient to avoid the effect $e$ (corresponding to the left conjunct), or, neither cause and effect appear, and then moving to a $c$-world is sufficient to bring about the effect (the mirrored case). Consequently, this causal relation holds even when cause and effect are not present in a world, and hence may be model checked on a system to infer whether it holds globally.
 
With \qptlc, we can improve on Lewis' original formulation in several ways. First, we can use the `Universal Would' counterfactual and hence need not assume a total similarity relation.  Further, Lewis' original logic lacks facilities for temporal reasoning and, hence, cannot express that a cause is ``an object followed by another''. In \qptlc, it is possible to express that the cause has to preceed the effect, and the remaining temporal requirements, in the following way:

$$\big((\LTLeventually c \land \LTLeventually e) \rightarrow (\lnot e \LTLuntil c)\big) \land (\LTLeventually c \ucf \LTLeventually e) \land  (\lnot \LTLeventually c \ucf \lnot \LTLeventually e) \enspace .$$

We can deduce causal relationships by checking this formula for validity in a universe of interest. It ensures that if the cause and effect appear on some trace, then, particularly, the cause happens before the effect. While this solves the lacking temporal expressivity in Lewis' logic, it still shares the idiosyncratic features of this definition of causality which have been raised in the literature since. For instance, if a cause has two effects, the earlier effect is also considered a cause for the later effect.

These peculiarities cannot be solved through counterfactual reasoning alone and over the years several solutions in the causal modeling literature have emerged, for instance \emph{interventions} and \emph{contingencies}. In the following paragraph, we show that our framework can emulate these concepts by modeling their mechanics in the universe and the similarity relation. 

\paragraph{Halpern's Actual Causality~\cite{Halpern15}.}

	We outline a direct correspondence of our approach to Halpern's modified version~\cite{Halpern15} of actual causation~\cite{HalpernP01}, for binary variables. The exact definition of this notion of causality is given in Appendix~\ref{app:actual} due to space reasons. The general idea is that the dynamics of the universe are defined by a structural equation $f_x$ for every variable $x$, which may depend on other variables and a set of external variables which are defined by a context. We assume these to be Boolean functions in the case of binary variables. Halpern restricts the analysis to acyclic dependencies. Based on some model $\mathcal{M} = (\mathcal{S},\mathcal{F})$ and context $c$ we may assume an evaluation function $U^\mathcal{M}_c(x)$ which tells us for any variable $x \in V$ whether it evaluates to $1$ or $0$. To establish a correspondence to \qptlc, we construct a universe for a specific context under analysis based on basic equivalences for every variable: $x \leftrightarrow ((f_x \land \lnot i_x) \lor c_x)$ if  $U^\mathcal{M}_c(x) = 1$, else $x \leftrightarrow ((f_x \lor i_x) \land \lnot c_x)$. If some $f_x$ depends on external variables, we substitute them by the appropriate Boolean constant depending on the context. Then, $\mathcal{U}(\mathcal{M},c)$ denotes the conjunction of all of these equivalences together with $\LTLnext \LTLglobally (\bigwedge_{v \in V} \lnot v \land \lnot i_v \land \lnot c_v)$, i.e., all traces have only empty sets after the first position. This suffices because actual causation has no particular temporal aspect, so we may effectively model all outcomes of the structural equations on a finite trace prefix of length one. Let $t^\mathcal{M}_c \in \mathcal{L}(\mathcal{U}(\mathcal{M},c))$ be the unique trace where for all $n > 0: t^\mathcal{M}_c[n] = \emptyset$ and for all $x \in V: \{i_x,c_x\} \cap  t^\mathcal{M}_c[0] = \emptyset$, i.e., no interventions or contingencies are present. It may seem problematic that we have to construct the universe based on given observations, but note that even in structural equations, certain equations have to be manipulated for modeling interventions and contingencies.
	
	\begin{theorem}\label{thm:hp_to_cf}
		$X_0 = x^0_0 \land \ldots \land x^k_0 ,\ldots, X_n = x^0_n \land \ldots \land x^j_n$ are actual causes of $\varphi$ in $(\mathcal{M},\vec{u})$, iff $t^\mathcal{M}_c \models  \varphi_X \land \varphi \land \lnot \varphi_X \mcfmin \lnot \varphi$ in the universe $\mathcal{U}(\mathcal{M},c)$ with respect to the similarity relation $\leq_t\!\!(\{i_x \mid x \in V\})$ that tracks only the active interventions between traces, where 
		$$ \lnot \varphi_X = (\lnot x^0_0 \land \ldots \land \lnot x^k_0) \lor \ldots \lor (\lnot x^0_n \land \ldots \land \lnot x^j_n)  \text{ is in Blake canonical form~\cite{Blake1937}.}$$
	\end{theorem}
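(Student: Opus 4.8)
The plan is to prove both directions of the biconditional by unfolding the semantics of the right-hand side and matching each of its three syntactic parts against the conditions AC1--AC3 of the modified definition of Halpern recalled in Appendix~\ref{app:actual}. The enabling observation is that, since all traces in $\mathcal{U}(\mathcal{M},c)$ are empty after position~$0$, every world is completely determined by the values its first position assigns to the auxiliary propositions $\{i_x, c_x \mid x \in V\}$: the defining equivalences then fix the value of every $x$. Thus a world corresponds precisely to a choice of an intervention set $I = \{x \mid i_x \in t[0]\}$ flipping the designated variables away from their actual value $U^\mathcal{M}_c(x)$, together with a contingency set pinning other variables to their actual values. Under this dictionary $t^\mathcal{M}_c$ is the world with $I = \emptyset$ and no contingencies, and because $\leq_t(\{i_x \mid x \in V\})$ tracks only the $i_x$, it ranks worlds by $\subseteq$-inclusion of their intervention sets; hence the closest $\varphi$-worlds are exactly those using $\subseteq$-minimal interventions, and contingencies come ``for free'' with respect to distance. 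I would use this correspondence to translate the quantification over worlds inside the counterfactual operators into the intervention/contingency vocabulary of AC2.

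First I would dispatch the easy conjuncts: since $t^\mathcal{M}_c$ is the intervention- and contingency-free world, $t^\mathcal{M}_c \models \varphi_X \land \varphi$ holds iff the actual context makes both candidate cause and effect true, which is AC1. For $\lnot \varphi_X \mcf \lnot \varphi$ I would unfold Definition~\ref{def:might_cf}. The possibility clause $\exists W_1 \ldot W_1 \models \lnot \varphi_X$ merely says the cause can be flipped. The main clause --- for every $\lnot \varphi_X$-world $W_1$ there is an at-least-as-close $W_2$ with $W_2 \models \lnot \varphi_X \land \lnot \varphi$ --- is where AC2 lives: instantiating $W_1$ with a $\subseteq$-minimal flipping world exhibits, via $W_2$, a witness intervening on (the literals of) a minimal cause and, using freely available contingencies on a set $\vec{W}$ of variables held at their actual values, flipping the effect to $\lnot \varphi$, which is AC2(a). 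I would then argue that the ``for all $W_1$'' quantifier together with the contingency machinery yields AC2(b): since pinning any subset $\vec{W}' \subseteq \vec{W}$ while leaving the cause at its actual value produces $\varphi_X$-worlds, the closest-witness demand forces $\varphi$ to persist on them. I would emphasize that `Might' rather than `Would' is the right operator precisely because AC2 quantifies \emph{existentially} over the witnessing partition and contingency, matching the existential search for $W_2$ rather than a universal demand on all closest worlds.

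The minimality layer is handled by the outer minimal operator together with the Blake normal form. By Definition~\ref{def:mini}, $\lnot \varphi_X \mcfmin \lnot \varphi$ additionally requires $\lnot \varphi_X$ to be the logically weakest (largest) antecedent for which the `Might' holds, a second-order condition made first-order by Lemma~\ref{lem:qe_cf}. I would show that each disjunct $(\lnot x^0_i \land \ldots \land \lnot x^k_i)$ of the Blake canonical form is a prime implicant exactly when its intervention set cannot be shrunk without losing the counterfactual dependence --- this is AC3, the minimality of the individual cause $X_i$ --- and that the weakest-antecedent requirement forces the disjunction to contain \emph{every} such prime implicant, i.e.\ every minimal cause, giving the completeness asserted by ``$X_0, \ldots, X_n$ are the actual causes''. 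Conversely, any antecedent strictly weaker than the full disjunction of prime implicants would admit a flipping world with no dominating effect-flipping world, breaking the `Might' clause; this supplies the matching direction.

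The hard part will be the precise treatment of AC2(b) and of the contingency values. The construction only offers contingencies that pin a variable to its \emph{actual} value, so the argument must either invoke that, for the modified definition over binary variables, the witness values $\vec{w}$ may be taken to be the actual values of $\vec{W}$, or read this off the appendix formulation directly; getting this normalization right, and showing that the universal-$W_1$/existential-$W_2$ pattern of the non-total `Might' semantics coincides with AC2(a)$+$AC2(b) rather than being strictly stronger or weaker, is the crux. A secondary subtlety is confirming that Blake's canonical form is the correct syntactic object, i.e.\ that the prime implicants of $\lnot \varphi_X$ are in bijection with the $\subseteq$-minimal effect-flipping intervention sets, so that AC3-minimality and the weakest-antecedent condition of $\mcfmin$ genuinely coincide.
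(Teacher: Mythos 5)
Your proposal follows essentially the same route as the paper's proof: a dictionary between worlds of $\mathcal{U}(\mathcal{M},c)$ and intervention/contingency assignments (the paper's auxiliary Results~2 and~3), with contingencies forming distance-equivalence classes under $\leq_t\!(\{i_x \mid x \in V\})$, matching AC1 to $\varphi_X \land \varphi$, AC2 to the existential witness $W_2$ in the `Might' clause, and AC3 together with exhaustiveness of the cause list to the minimality of the antecedent in Blake canonical form. The difficulty you flag around AC2(b) dissolves because the paper works with Halpern's \emph{modified} definition, in which the contingency $\vec{W}$ is by definition held at its actual value $\vec{w}$ --- exactly what the $c_x$-construction supplies --- and the one step you leave implicit is the paper's Result~1, that by minimality of a Boolean cause the witnessing intervention may be assumed to flip \emph{all} of its variables, which is what puts the conjuncts of $\lnot \varphi_X$ in correspondence with the closest intervention sets.
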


The intuition behind the proof is that the closest traces to $t^\mathcal{M}_c$ that satisfy $\lnot \varphi_X$ require negating one of the disjuncts, and hence require flipping the values of all of the variables appearing there. This simulates interventions on these variables. Existential search for contingencies is taken care of by the `Might' counterfactual modality, since all possible contingencies for some intervention form an equivalence class under $\leq_t\!(\{i_x \mid x \in V\})$.
The minimality of the counterfactual ensures that $\lnot \varphi_X$ cannot be enlarged. This means neither can a conjunct be added to $\varphi_X$, and hence indeed all causes are described, nor can a disjunct be removed, and hence all causes are minimal. We give an illustrative example of this encoding in the following.

	\begin{example}
		Consider the classic example of determining whether lightning (l) or a dropped match (m) caused a forest fire (f). Assume that only the match was dropped and that this suffices to cause the fire because the equation for the fire is $f := l \lor m$ and the other two variables are determined by outside factors. We construct our universe based on the concrete observations $f = 1$, $l = 0$, and $m=1$ and model interventions and contingencies accordingly:
		\begin{align*}
			\mathcal{U}_{\mathit{fire}}(\mathcal{M},c) = \, &(l \leftrightarrow (i_l \land \lnot c_l)) \land 
			(m \leftrightarrow (\lnot i_m \lor c_m)) \\
			&\land \, \big(f \leftrightarrow (((l \lor m) \land \lnot i_f) \lor c_f)\big) \enspace .
		\end{align*} 
		Note that the actual causes according to Halpern's definition in this scenario are $f = 1$, i.e., the effect itself, and $m = 1$. We have that the formula $f \land m$ qualifies as a cause for the effect $f$ on the trace $t_\mathit{fire} = \{f,m\}\emptyset^\omega$ describing the above scenario. This is because we have $t_\mathit{fire} \models f \land m \land (\lnot f \lor \lnot m) \mcfmin \lnot f$ in $\mathcal{U}_{\mathit{fire}}(\mathcal{M},c)$.
	\end{example}
	
	All conjunctions appearing in the Blake canonical form in Theorem~\ref{thm:hp_to_cf} are prime implicants. Dubslaff et al.~\cite{DubslaffWBA22} have used these before to compute feature causes, which are counterfactual in nature, in configurable software systems. We use prime implicants here to establish an intriguing formal connection between property-based causes and event-based causes in structural equation models. This result is valuable because event-based causes have a restrictive and explicit syntax. The formula of Theorem~\ref{thm:hp_to_cf} allows characterizing the same counterfactual reasoning with more expressive languages for causes and provides a formal basis for property-based extensions of actual causality. We will discuss such an extension by Coenen et al.\ in the following paragraph.

\paragraph{Coenen et al.'s Temporal Causality~\cite{CoenenFFHMS22}.}

Coenen et al.\ extend interventions and contingencies to reactive systems described by Moore automata, and in this way lift Halpern's actual causality to temporal properties on traces. The key idea is that, if the reference trace $t$ is given in a lasso-shape, then the behavior of contingencies can be modeled in a finite-state machine called counterfactual automaton $\mathcal{C}_t^\mathcal{T}$, where $\mathcal{T}$ is the original Moore automaton. This corresponds to the construction for finite settings in the above paragraph. Interventions need not be modeled explicitly because Coenen et al.'s causality only characterizes causes on the input sequence of the traces and as there are no dependencies between inputs, just changing them outright suffices to model interventions. Therefore, the concrete distance metric used in that work is a modification of $\leq_t\!\!(\mathcal{I})$, where $\mathcal{I}$ is the set of inputs. The modification is done in order to satisfy the Limit Assumption, but results in a coarse over-approximation of the set of closest traces. For instance, for $t = \{a\}^\omega$, the closest traces satisfying $\LTLeventually \LTLglobally \lnot a$ with Coenen et al.'s similarity relation are the whole set $\mathcal{L}(\LTLeventually \LTLglobally \lnot a)$, while our work does not require the Limit Assumption to hold and can use the unmodified $\leq_t\!\!(\mathcal{I})$, which has an infinite chain of ever closer traces: $\{\}^\omega,\{a\}\{\}^\omega,\{a\}\{a\}\{\}^\omega,$ and so on.

It turns out that when we characterize Coenen et al.'s definition of causality in \qptlc, we see a minor divergence from Halpern's actual causality. We can encode Coenen et al.'s causality in this way: $\varphi$ is a temporal cause for $\psi$ on $t$, iff $$t \models \varphi \land \psi \land \big((\lnot \varphi \ucfmin \lnot \psi) \lor (\lnot \varphi \mcfmin \lnot \psi)\big) \enspace .$$
in the universe defined by the traces of $\mathcal{C}_t^\mathcal{T}$. We include the `Universal Would' counterfactual because we need to make use of its vacuous satisfaction mechanics, i.e., $\top$ may qualify as a temporal cause according to Coenen et al.'s definition by virtue of quantification over an empty set, but $\bot$ will never work as an antecedent in $\mcfmin$, since there needs to be a trace satisfying it. We can use the compositional nature of \qptlc to emulate this by using $\ucfmin$, because it is, except for the vacuous case, a stronger condition, i.e., whenever $\lnot \varphi \ucfmin \lnot \psi$ holds on a trace $t$ in a universe and a $\lnot \varphi$-trace exists, also $\lnot \varphi \mcfmin \lnot \psi$ holds on trace $t$. We believe that the compositionality of a logic like \qptlc can be a useful tool for comparing different definitions of causality, as demonstrated here between Halpern's actual causality and Coenen et al.'s temporal causality. This process can even  be automated with the outlined \qptlc decision procedures.

\section{Conclusion}
In this paper, we study a fusion of two prominent flavors of modal logic: counterfactual and temporal reasoning. Our theoretical results are a step towards the automatic evaluation of temporal counterfactual conditionals on infinite sequences, such as counterexample traces returned by a model checker or trajectories of a reinforcement learning agent. Further, our extension of Lewis' theory of counterfactual conditionals to non-total similarity relations and our minimal counterfactual operators are relevant to the theory of counterfactuals beyond the temporal reasoning considered in this work. In the future, we plan on using our logic to define system-level and trace-level causation in reactive systems, which correspond to the notions of global and actual causation. An interesting question here is, whether system-level semantics of counterfactuals should be a universal application of the trace semantics, or should counterfactually relate different system models. The latter approach may utilize previous work on system mutations~\cite{KupfermanLS08} studied in the area of coverage~\cite{ChocklerKV01}, which has a tight relationship to counterfactual causality~\cite{ChocklerHK08}. We are also interested in automating the discovery of causal relationships between temporal properties on infinite sequences. This problem can be framed as synthesizing satisfying antecedents of counterfactuals.

\bibliographystyle{abbrv}
\bibliography{bibliography}

\appendix

\section{Additional Preliminaries}

\subsection{HyperQPTL}\label{app:hyperqptl}

The syntax of \hyperqptl~\cite{Rabe16,CoenenFHH19} is given as follows, where $\psi$ is a \qptl formula. For sake of simplicity, we only present a fragment of \hyperqptl without alternation of propositional and trace quantification. This fragment suffices for the results presented in this work.
\begin{equation*}
	\chi \Coloneqq \exists \pi \ldot \chi \mid \forall \pi \ldot \chi \mid \psi \enspace .
\end{equation*}
The semantics of \hyperqptl are not defined over single traces, but sets of traces. The alphabet of atomic propositions $\mathit{AP}_{\texttt{Hyper}} = \{a_\pi \mid a \in \mathit{AP} \land \pi \in \mathcal{V} \}$, therefore, is indexed by trace variables. The satisfaction relation is defined with respect to a time point~$i$, a set of traces~$\Tr$ and a trace assignment $\traceassign: \tracevars \rightarrow \Tr$ that maps trace variables to traces. 
To update the trace assignment so that it maps trace variable~$\pi$ to trace~$t$, we write $\traceassign[\pi \mapsto t]$. We further lift our replacement function $t[q \mapsto t_q]$ to sets of traces such that $T[q \mapsto t_q] = \{t[q \mapsto t_q] \mid t \in T \}$.
\begin{equation*}
	\begin{array}{lll}
		\traceassign,i \models_\Tr a_\pi       & \text{iff } & a \in \traceassign(\pi)[i] \\
		\traceassign,i \models_\Tr q         & \text{iff } & \forall t \in \Tr \ldot q \in t[i] \\
		\traceassign,i \models_\Tr \neg \varphi              & \text{iff } & \traceassign,i \nmodels_\Tr \varphi \\
		\traceassign,i \models_\Tr \varphi \land \psi         & \text{iff } & \traceassign,i \models_\Tr \varphi \text{ and } \traceassign,i \models_\Tr \psi \\
		\traceassign,i \models_\Tr \X \varphi                & \text{iff } & \traceassign,i+1 \models_\Tr \varphi \\
		\traceassign,i \models_\Tr \varphi\U\psi             & \text{iff } & \exists j \geq i \ldot \traceassign, j \models_\Tr \psi \; \land \forall i \leq k < j \ldot \traceassign,k \models_\Tr \varphi \\
		
		\traceassign,i \models_\Tr \forall \pi \ldot \varphi & \text{iff } & \forall t \in \Tr \ldot \traceassign[\pi \mapsto t],i \models_\Tr \varphi \\
		\traceassign,i \models_\Tr \exists \pi \ldot \varphi & \text{iff } & \exists t \in \Tr \ldot \traceassign[\pi \mapsto t],i \models_\Tr \varphi\\
		\traceassign,i \models_\Tr \forall q \ldot \varphi & \text{iff } & \forall t_q \in (2^{\{q\}})^\omega \ldot \traceassign,i \models_{\Tr[q \mapsto t_q] } \varphi \\
		\traceassign,i \models_\Tr \exists q \ldot \varphi & \text{iff } & \exists t_q \in (2^{\{q\}})^\omega \ldot \traceassign,i \models_{\Tr[q \mapsto t_q] } \varphi \\\enspace 
	\end{array}
\end{equation*}
A set of traces $T$ satisfies a \hyperqptl formula $\varphi$ iff $\emptyset,0 \models_T \varphi $, which we also denote by $T \models \varphi$.

\begin{example}
	With \hyperqptl, we can relate traces to one another. For instance, the following formula characterizes sets where all pairs of traces (and hence all traces) have the same action sequence:
	$$\forall \pi \ldot \forall \pi' \ldot \LTLglobally (u_\pi \leftrightarrow u_{\pi'}) \land \LTLglobally (d_\pi \leftrightarrow d_{\pi'}) \enspace .$$
	It is satisfied by the singleton set $\{t\}$ with trace $t$ from Example~1, but not by $\{t,t'\}$ where $t' = (\{b,d\})^\omega$, since, e.g., the actions at the first position of the traces differ.
\end{example}

\subsection{Actual Causality}\label{app:actual}
Actual causality was originally proposed by Halpern and Pearl \cite{HalpernP01}. Several improvements have been appeared since, we consider the latest of these proposed by Halpern \cite{Halpern15}. A \emph{causal model} $\mathcal{M} = (\mathcal{S},\mathcal{F})$ is defined by a \emph{signature} $\mathcal{S}$ and set of \emph{structural equations} $\mathcal{F}$. A~signature $\mathcal{S}$ is a tuple $(E,V,R)$, where $E$ is a set of \emph{exogenous} variables, $V$ is a set of \emph{endogenous} variables, and $R$ defines  the \emph{range} of possible values $R(Y)$ for all variables $Y \in E\cup V$. For some context $\vec{u}$, the value of an exogenous variable is determined by factors outside of the model, while the value of some endogenous variable $X$ is defined by the associated structural equation $f_X \in \mathcal{F}$.

\begin{definition}\label{def:hpcausality}
	$\vec{X} = \vec{x}$ is an \emph{actual cause} of $\varphi$ in $(\mathcal{M},\vec{u})$, if the following 
	holds.
	\begin{description}
		\item AC1: $(\mathcal{M},\vec{u}) \models \vec{X} = \vec{x}$ and $(\mathcal{M},\vec{u}) \models \varphi$, i.e., cause and effect are true in the actual world, and
		\item AC2: There is a set $\vec{W}$ of variables in $V$ and a setting $\vec{x}'$ of the variables in $\vec{X}$ such that if $(\mathcal{M},\vec{u}) \models \vec{W} = \vec{w}$, then $(\mathcal{M},\vec{u}) \models [\vec{X} \leftarrow \vec{x}',\vec{W} \leftarrow \vec{w} ] \lnot \varphi$, and
		\item AC3: $\vec{X}$ is minimal, i.e. no subset of $\vec{X}$ satisfies AC1 and AC2.
	\end{description}
\end{definition}

In the case of binary variables, we may denote a cause $\vec{X} = \vec{x}$ with $X = x^0 \land \ldots \land x^n$, where for $0 \leq i \leq n$ a literal $x^i$ is positive if it evaluates to 1 in $ \vec{x}$, and negative if not. We denote by $V(x^i)$ the variable $v \in V$ associated with literal $x_i$.

\section{Proofs}

\setcounter{counter}{8}

\begin{proposition}
	If $\leq_W$ is a total order, then `Universal Would' and `Would' are equivalent, i.e., for all pairs of properties $\varphi$ and $\psi$ we have that $\varphi \ucf \psi \equiv \varphi \cf \psi$.
\end{proposition}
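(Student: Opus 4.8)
The plan is to prove the two implications separately, exploiting the fact that on a total order any two $\varphi$-worlds are $\leq_W$-comparable. Recall the semantics: `Would' ($\varphi \cf \psi$) holds iff either no world satisfies $\varphi$, or there is a $\varphi$-world $W_1$ such that every $W_2 \leq_W W_1$ satisfies $\varphi \rightarrow \psi$ (a ``threshold world''). `Universal Would' ($\varphi \ucf \psi$) holds iff either no world satisfies $\varphi$, or \emph{every} $\varphi$-world $W_1$ admits some $W_2 \leq_W W_1$ with $W_2 \models \varphi$ below which all $\varphi$-worlds satisfy $\psi$ (a threshold on every chain). Since both definitions share the vacuous disjunct $\forall W_1 \ldot W_1 \nmodels \varphi$, I would dispatch that case immediately and assume throughout that at least one $\varphi$-world exists.

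\textbf{From `Universal Would' to `Would'.}
This direction needs no totality and is essentially immediate: `Universal Would' quantifies universally over all $\varphi$-worlds, so instantiating its body at any single witnessing $\varphi$-world yields precisely the existential threshold condition required by `Would'. Concretely, pick any $W_1 \models \varphi$; `Universal Would' hands me a $W_2 \leq_W W_1$ with $W_2 \models \varphi$ and $\forall W_3 \leq_W W_2 \ldot W_3 \models \varphi \rightarrow \psi$. This $W_2$ is exactly the witness needed for Condition~\nameref{righteq:would} of Definition~\ref{def:would_cf}.

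\textbf{From `Would' to `Universal Would'.}
Here totality does the work. Suppose `Would' holds with threshold witness $W_1^\ast \models \varphi$ such that every $W_2 \leq_W W_1^\ast$ satisfies $\varphi \rightarrow \psi$. I must show the universal body of `Universal Would' for an \emph{arbitrary} $\varphi$-world $W_1$. By totality, $W_1^\ast$ and $W_1$ are comparable. If $W_1^\ast \leq_W W_1$, then $W_2 := W_1^\ast$ already lies below $W_1$, satisfies $\varphi$, and its threshold property is exactly what is required. If instead $W_1 \leq_W W_1^\ast$, then $W_1$ itself lies below the threshold $W_1^\ast$, so I take $W_2 := W_1$: every $W_3 \leq_W W_1$ is also $\leq_W W_1^\ast$ by transitivity, hence satisfies $\varphi \rightarrow \psi$. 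In either case the required $W_2$ exists, establishing `Universal Would'.

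\textbf{Main obstacle.}
The only delicate point is the second direction, and specifically the case split driven by totality together with the rejection of the Limit Assumption: I cannot assume the threshold witness $W_1^\ast$ is itself minimal, only that everything below it behaves well. The argument must therefore route through \emph{comparability plus transitivity} rather than through any claim about closest worlds, since infinite descending $\varphi$-chains may exist (cf.\ Example~\ref{ex:ftop}). Keeping the reasoning purely in terms of the preorder, and never invoking a minimum of the $\varphi$-worlds, is what makes the equivalence go through cleanly.
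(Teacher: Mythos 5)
Your proof is correct and follows essentially the same route as the paper's: both directions are handled identically, with the vacuous case dispatched first, the `Universal Would' $\Rightarrow$ `Would' direction obtained by instantiating the universal quantifier at any $\varphi$-world and passing its witness $W_2$ to the existential of Definition~\ref{def:would_cf}, and the converse direction resolved by the same totality-driven case split ($W_1^\ast \leq_W W_1$ versus $W_1 \leq_W W_1^\ast$) using transitivity. No gaps.
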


\begin{proof}
	We prove the equivalence by proving the entailment in each direction separately. So first assume that there exists a world $W$ that satisfies $\varphi \ucf \psi$. We show that this world also satisfies $\varphi \cf \psi$. Consider two cases: First let us assume $\varphi \ucf \psi$ is vacuously satisfied by $W$ such that there exists no world $W_1$ with $W_1 \models \varphi$. Then $W$ also vacuously satisfies $\varphi \cf \psi$. As the second case, consider the non-vacuous case such that there exists at least one world $W'$ that satisfies $\varphi$. Then, $\varphi \ucf \psi$ is satisfied because, for all worlds $W_1$ that satisfy $\varphi$, there exists an at least equally close world $W_2$ that also satisfies $\varphi$ such that all closer worlds $W_3$ satisfy $\varphi \rightarrow \psi$. Hence, such a $W_2$ world exists in particular for $W'$, and can serve as a witness for the existential quantifier in the semantics of $\cf$ (Condition~2 of Definition~5). Hence, $W \models \varphi \cf \psi$.
	
	For the entailment in the other direction, assume that there is a world $W$ satisfying $\varphi \cf \psi$. We show that this world also satisfies $\varphi \ucf \psi$. Again, if $W$ satisfies $\varphi \cf \psi$ vacuously it also satisfies $\varphi \ucf \psi$ vacuously. We, therefore, assume that there is a world $W_1$ such that $W_1 \models \varphi$ and all at least equally close worlds $W_2$ satisfy $\varphi \rightarrow \psi$. We now show that this is enough to satisfy Condition~2 of Definition~9. Pick any world $W'$ as an instantiation of the outermost universal quantifier in that condition. If $W' \nmodels \varphi$ it trivially satisfies the implication in the quantifiers body, so assume $W' \models \varphi$. Since $\leq_W$ is a total order, we know that either $W' \leq_W W_1$ or $W_1 \leq_W W'$. If $W' \leq_W W_1$, we know that all smaller worlds $W'' \leq_W W'$ satisfy $\varphi \rightarrow \psi$ due to transitivity of $\leq_W$. Hence, $W'$ is a witness for the existential quantifier in Condition~2 of Definition~9. Lastly, assume $W_1 \leq_W W'$. Then $W_1$ is the witness for the existential quantifier. In any case, we have $W \models \varphi \ucf \psi$ which closes this direction.
	
\end{proof}

\addtocounter{counter}{1}

\begin{corollary}
	If $\leq_W$ is a total order, then `Existential Might' and `Might' are equivalent, i.e., for all pairs of properties $\varphi$ and $\psi$ we have that $\varphi \emcf \psi \equiv \varphi \mcf \psi$.
\end{corollary}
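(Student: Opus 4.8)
The plan is to avoid unfolding the quantifier-heavy semantics directly and instead ride on the two duality laws together with the already-established Proposition~\ref{prop:equiv_ccf_mcf}. Both `Might' operators were introduced precisely as duals of their `Would' counterparts: $\varphi \cf \psi \equiv \lnot(\varphi \mcf \lnot\psi)$ and $\varphi \ucf \psi \equiv \lnot(\varphi \emcf \lnot\psi)$. Reading each law with $\psi$ replaced by $\lnot\psi$ and negating both sides yields the dual normal forms $\varphi \mcf \psi \equiv \lnot(\varphi \cf \lnot\psi)$ and $\varphi \emcf \psi \equiv \lnot(\varphi \ucf \lnot\psi)$, which express each `Might' operator through its `Would' counterpart applied to $\lnot\psi$.

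With these in hand, the corollary reduces to a short chain of equivalences valid on any total order $\leq_W$:
\begin{align*}
    \varphi \emcf \psi \equiv \lnot(\varphi \ucf \lnot\psi) \equiv \lnot(\varphi \cf \lnot\psi) \equiv \varphi \mcf \psi,
\end{align*}
where the middle step is exactly Proposition~\ref{prop:equiv_ccf_mcf} instantiated with consequent $\lnot\psi$ and then negated on both sides, and the outer steps are the dual normal forms derived above. Since totality is the sole hypothesis invoked, and only through the proposition, the argument transfers verbatim and needs no separate case analysis.

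The one point requiring care — and the step I expect to be the main obstacle — is confirming that the stated duality laws are genuine semantic identities relating the explicit definitions (Definitions~\ref{def:might_cf} and~\ref{def:emight_cf}) to their `Would' counterparts, rather than only the informal motivation behind them. For Lewis' pair this is standard. For $\ucf/\emcf$ I would verify it once by negating Definition~\ref{def:emight_cf} with $\lnot\psi$ in place of $\psi$: pushing the negation inward turns the outer $\exists W_1$ into a $\forall W_1$ guarded by $W_1 \models \varphi$, turns the inner $\exists W_3$ into a universally quantified implication yielding $W_3 \models \varphi \rightarrow \psi$, and reproduces exactly the second disjunct of Definition~\ref{def:uwould_cf}. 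The only subtlety is that Definition~\ref{def:uwould_cf} additionally lists the vacuous disjunct $\forall W_1 \ldot W_1 \nmodels \varphi$; but this disjunct is subsumed by the second one, since when no world satisfies $\varphi$ the guarded universal holds trivially. Hence the two definitions coincide, the duality law is an identity, and the reduction above completes the proof.
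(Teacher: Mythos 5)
Your proof is correct and follows essentially the same route as the paper: both derive the corollary from the two duality laws $\varphi \cf \psi \equiv \lnot(\varphi \mcf \lnot\psi)$ and $\varphi \ucf \psi \equiv \lnot(\varphi \emcf \lnot\psi)$ together with Proposition~\ref{prop:equiv_ccf_mcf} by substitution. Your additional check that the second duality law is a genuine semantic identity between Definitions~\ref{def:uwould_cf} and~\ref{def:emight_cf} (including the observation that the vacuous disjunct is subsumed) is a worthwhile detail that the paper leaves implicit, but it does not change the approach.
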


\begin{proof}
	The result follows directly from the two duality laws $\varphi \cf \psi \equiv \lnot (\varphi \mcf \lnot \psi)$ and $\varphi \ucf \psi \equiv \lnot (\varphi \emcf \lnot \psi)$, and Proposition 10, through substitution.
\end{proof}

\addtocounter{counter}{2}

\begin{lemma}
	For a minimal counterfactual statement $\varphi \rightsquigarrow_{\!_\mathit{min}} \! \psi$, with $\rightsquigarrow \; \in \{ \cf,\mcf,\ucf,\emcf \}$, there exists a parameterized formula $\varphi_{\mathit{FO}}(W)$, with $W \in \mathcal{U}$, that quantifies only over worlds and not properties such that for all worlds $W' \in \mathcal{U}$:
	$W' \models \varphi \rightsquigarrow_{\!_\mathit{min}} \psi$ iff $\varphi_{\mathit{FO}}(W')$ is valid.
\end{lemma}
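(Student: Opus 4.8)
The plan is to split the two conjuncts of Definition~\ref{def:mini}. Condition~\nameref{lefteq:mincf}, the underlying counterfactual $W \models \varphi \rightsquigarrow \psi$, already quantifies only over worlds (Definitions~\ref{def:would_cf}, \ref{def:might_cf}, \ref{def:uwould_cf} and~\ref{def:emight_cf}), so it is directly first-order over $\mathcal{U}$. The work lies entirely in Condition~\nameref{righteq:mincf}, the second-order clause $\nexists \varphi' \ldot (\varphi \rightarrow \varphi') \land (\varphi' \not\rightarrow \varphi) \land W \models \varphi' \rightsquigarrow \psi$. I would eliminate this property-quantifier through one key claim: \emph{a strictly larger qualifying antecedent exists iff some single world can be added to $\mathcal{L}(\varphi)$ while preserving the counterfactual}. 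Writing $S := \mathcal{L}(\varphi)$ and letting $\mathrm{qual}(V_0)$ abbreviate the first-order condition obtained from the semantics of $\rightsquigarrow$ (with consequent $\psi$, reference $W$) by replacing every membership test $X \models \varphi$ with $(X \models \varphi \lor X = V_0)$, the claim is that Condition~\nameref{righteq:mincf} fails iff $\exists V_0 \ldot V_0 \nmodels \varphi \land \mathrm{qual}(V_0)$. Granting the claim, $\varphi_{\mathit{FO}}(W)$ is the conjunction of the first-order rendering of $W \models \varphi \rightsquigarrow \psi$ with $\forall V_0 \ldot (V_0 \nmodels \varphi \rightarrow \lnot\,\mathrm{qual}(V_0))$, and every quantifier ranges over worlds.

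The easy direction of the claim is immediate: if $\mathrm{qual}(V_0)$ holds for some $V_0 \nmodels \varphi$, a characteristic formula of $S \cup \{V_0\}$ is a strict superset-antecedent witnessing non-minimality. For the substantive direction I would take an arbitrary qualifying strict superset $S' \supsetneq S$ and \emph{extract a single world} whose addition to $S$ already qualifies. Two facts make this possible: (i) $S$ itself qualifies, since Condition~\nameref{lefteq:mincf} is assumed, supplying a fallback witness structure; and (ii) the witnesses in each operator's semantics are of a \emph{robust} shape that survives passage to sub-collections of $S'$. For $\cf$ and $\ucf$ the witness $W_2$ is a \emph{threshold} below which all antecedent-worlds satisfy $\psi$; for $\mcf$ and $\emcf$ the innermost witness $W_3$ is itself a $\psi$-world. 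Since any such witness for $S'$ remains a witness in every subset of $S'$ containing it, and since adding a $\psi$-world or a $\psi$-downward-closed threshold can never destroy another world's witness, I can add exactly one such witness world — taken from $S' \setminus S$ when a new one exists there, otherwise falling back on the witnesses $S'$ already locates inside $S$ — and verify the operator's condition for $S \cup \{V_0\} \subseteq S'$ directly.

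Concretely: for $\cf$ I add the threshold certifying $S'$ (reusing it if it already lies in $S$); for $\mcf$ the $\psi$-world sitting below the newly introduced world; for $\ucf$ a new threshold if $S' \setminus S$ contains one, else any $V_0 \in S' \setminus S$ together with the in-$S$ thresholds that $S'$ supplies for every world, which remain valid since a $\psi$-threshold never violates a downward-$\psi$ requirement; and for $\emcf$ I reuse the witness of $S$ and discharge the single new world via its $\psi$-witness from $S'$. Each case reduces to a short check that the chosen witness survives the addition and that the added world witnesses itself.

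The main obstacle is exactly this substantive direction, and its difficulty is the phenomenon the paper emphasizes: because the Limit Assumption is rejected, $S' \setminus S$ need not contain any $\leq_W$-minimal world, so one cannot induct on a least new element. The plan avoids minimal witnesses entirely and instead exploits the downward-closed, $\psi$-valued shape of the operators' witnesses, which is \emph{preserved} along infinite descending chains rather than broken by them. Establishing this robustness uniformly across all four operators — in particular checking that reusing $S'$'s witnesses for the \emph{old} worlds of $S$ never forces importing still further new worlds — is the delicate part of the argument.
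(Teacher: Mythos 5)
Your proposal rests on the same underlying insight as the paper's proof---non-minimality of the antecedent is witnessed world-by-world, so the property quantifier in the second condition of Definition~\ref{def:mini} can be traded for a world quantifier---but it factors the argument differently. The paper never isolates your key claim; it instead hand-writes, for each of the four operators, an explicit first-order formula that directly encodes ``no world below the relevant threshold/witness satisfies $\psi$ without satisfying $\varphi$, and every world beyond it is already in $\varphi$,'' and justifies each formula with per-operator intuition. Your single-world-extraction lemma (a strictly larger qualifying antecedent exists iff $\mathcal{L}(\varphi) \cup \{V_0\}$ qualifies for some single $V_0 \nmodels \varphi$) is a cleaner and more uniform organizing principle, and unfolding $\forall V_0 \ldot V_0 \nmodels \varphi \rightarrow \lnot\,\mathrm{qual}(V_0)$ recovers formulas close to the paper's. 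I checked the substantive direction for all four operators and it does go through, essentially because (as you observe) thresholds for $\cf$/$\ucf$ and innermost witnesses for $\mcf$/$\emcf$ themselves satisfy $\psi$, so importing one of them never breaks an existing witness. Be warned that the $\emcf$ case is more delicate than your sketch suggests: when every world of $S'\setminus S$ creates an unwitnessable obligation under $S$'s designated world $W_1$, you must \emph{switch} the designated world to the one certifying $S'$ (and argue it then lies in $S$ and imposes no new obligations), rather than merely ``reuse the witness of $S$.''

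Two gaps remain. First, your ``easy direction'' invokes ``a characteristic formula of $S \cup \{V_0\}$,'' but in the intended instantiation properties are \qptl formulas and worlds are arbitrary infinite traces, so $\mathcal{L}(\varphi) \cup \{V_0\}$ need not be definable in the logic at all (e.g., when $V_0$ is not ultimately periodic); the equivalence between ``some definable strict superset qualifies'' and ``some single world can be added'' therefore needs an additional argument or a convention that properties range over arbitrary sets of worlds. This gap is shared with the paper's own proof, which tacitly makes the same identification, but you should flag it. Second, you silently drop the boundary cases that the paper's formulas treat explicitly: since the operator semantics quantify only over $\mathcal{U}$ while the implications $\varphi \rightarrow \varphi'$ and $\varphi' \not\rightarrow \varphi$ in Definition~\ref{def:mini} do not, a $\varphi'$ that enlarges $\varphi$ only outside $\mathcal{U}$ (or the situation where $\psi$ is unsatisfiable in $\mathcal{U}$ and $\cf$/$\ucf$ hold vacuously) already defeats minimality; this is why the paper's formulas for $\cfmin$ and $\ucfmin$ contain the clauses quantifying over $\Bar{\mathcal{U}}$. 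Your $V_0$ must be allowed to range outside the universe in those clauses, or the resulting first-order formula is too weak.
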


\begin{proof}
	
	The high-level idea of the proof is that, iff $\varphi$ is not the minimal antecedent and there, therefore, exists a formula $\varphi'$ characterizing a superset of the worlds characterized by $\varphi$, then there exists a set of worlds $S$ that can be added to $\varphi$ without changing the truth value of the counterfactual conditional.
	Such worlds fall into two categories: they are further away from the reference world $W$ than the closest $\varphi$-worlds and can be added without considering whether they satisfy $\psi$, or they are at least equally close to $W$ than the closest $\varphi$-world but also satisfy $\psi$ so they could be added without harm and would constitute new closest $\varphi$-worlds. The exact relationships these worlds have to satisfy with respect to the closest $\varphi$-worlds differ between the operators, further, a difference is whether the reasoning has to be extended to incomparable worlds on other chains in the similarity relation. Therefore, we split the argumentation for the four minimal operators from now. We mark the world-variables that correspond to the closest $\varphi$-worlds of interest with $W^c$ in each, and else use variables $W'$ for $W_1$ to establish a close correspondence to definitions of the semantics. All other world variables are specific to the minimality reasoning. If not stated otherwise, quantifiers quantify over worlds from the universe $\mathcal{U}$. We have that $W$ satisfies $\varphi \cfmin \psi$ iff 
	\begin{align}
		\big(&(\forall W' \ldot W' \nmodels \psi) \land (\forall W^h \in \Bar{\mathcal{U}} \ldot \forall W' \in \mathcal{U} \ldot W^h \models \varphi \land W' \nmodels \varphi)\big) \lor \Big(\exists  W^c \ldot W^c \models \varphi \land \forall W'' \ldot\label{cfmin:eq1}\\ 
		&\big(W'' \leq_W W^c \Rightarrow W'' \models \varphi \leftrightarrow \psi \land (W'' \models \psi \Rightarrow \forall W^i \ldot W^i \not\leq_W W'' \Rightarrow W^i \models \varphi) \big) \, \land\label{cfmin:eq2} \\
		&\big( W'' \not\leq_W W^c \land W'' \models \psi \land (\exists W^n \ldot W^n \leq_W W^c \land W^n \not\leq_W W'' \land W^n \models \lnot \varphi) \Rightarrow\label{cfmin:eq3}\\ &\exists W^p \ldot W^p \leq_W W'' \land W^p \models \varphi \big)
		\Big)\label{cfmin:eq4}
	\end{align}
	is valid. The first disjunct in Line~\ref{cfmin:eq1} encodes the fact that iff $\psi$ is unsatisfiable with respect to the universe, then $\varphi$ should characterize exactly the complement of the universe.  The right disjunct encodes that we can enlarge the set characterized by $\varphi$ based on the following conditions in relation to the counterfactual world $W^c$ (corresponds to $W_1$ in Condition~2 of Definition~5):
	\begin{itemize}
		\item There is an at least equally close or closer world $W''$ such that $\psi$ holds but this world is not included in $\varphi$ yet (strengthens the semantic by $W'' \models \psi \rightarrow \varphi$ as in Line~\ref{cfmin:eq2}), or in $W''$ $\psi$ holds but further away or incomparable worlds are not included in $\varphi$ but could be (Rest of Line~\ref{cfmin:eq2}).
		\item There is a world at least as close as $W^c$ which could take its place in a more minimal antecedent, i.e., $W''$ satisfies $\psi$, all its at-least-as-close worlds are a proper subset of the ones of $W^c$ (Line~\ref{cfmin:eq3}) and are not included in $\varphi$ (Line~\ref{cfmin:eq4}). 
	\end{itemize}
	Next, we have that $W$ satisfies $\varphi \mcfmin \psi$ iff 
	\begin{align}
		(&\exists W'. W' \models \varphi) \land \big(\forall W' \ldot W' \models \varphi \Rightarrow \exists W^c . W^c \leq_W W' \land (W^c \models \varphi \land \psi) \, \land\label{mcfmin:eq1}\\ &\forall W''' \ldot (W^c \not\leq_W W''' \Rightarrow W''' \models \lnot\varphi \rightarrow \lnot\psi) \land (W^c \leq_W W''' \Rightarrow W''' \models \varphi)\big)\label{mcfmin:eq2}
	\end{align}
	is valid. Here, we have a fairly direct strengthening of the semantics of regular `Might' (Line~\ref{mcfmin:eq1}): We now additionally require for $\varphi$-world $W'$ on an infinite chain not only an at least equally closest world $W^c$ (this time corresponds to $W_2$ in Condition~2 of Definition~7)  that satisfies $\varphi \land \psi$, we also require that $\varphi$ includes all smaller $\psi$-worlds and hence place $\lnot \varphi \rightarrow \psi$ as a requirement on them. If this did not hold, we could include them in the property. Additionally, we could include any world on a different chain (and which is hence not comparable to $W^c$) if they are not yet included in $\varphi$ but satisfy $\psi$. Hence, such worlds also have to satisfy $\lnot \varphi \rightarrow \psi$. Both is covered by the left conjunct in Line~\ref{mcfmin:eq2}. Additionally, we again require that worlds further away than $W^c$ are included in the property (right disjunct in Line~\ref{mcfmin:eq2}). If they were not we could include them while still retaining $W^c$ as a witness for all of them.
	
	\medskip
	
	\noindent Next, we have that $W$ satisfies $\varphi \ucfmin \psi$ iff 
	\begin{align}
		\big(&(\forall W' \ldot W' \nmodels \psi) \land (\forall W^h \in \Bar{\mathcal{U}} \ldot \forall W' \in \mathcal{U} \ldot W^h \models \varphi \land W' \nmodels \varphi)\big) \, \lor\label{uw:eq1}\\ \big(&\exists W^h \ldot W^h \models \varphi \land \forall W' \ldot W' \models \varphi \Rightarrow \exists W^c \ldot W^c \leq_W W' \land W^c \models \varphi \land \forall W''' \ldot\label{uw:eq2}\\ &(W''' \leq_W W^c \Rightarrow W''' \models \varphi \leftrightarrow \psi) \land (W^c \leq_W W''' \Rightarrow W''' \models \varphi)\label{uw:eq3}\\ &\land (W''' \not\leq_W W^c \land W^c \not\leq_W W''' \Rightarrow W''' \models \lnot \varphi \rightarrow\lnot \psi )\big)\label{uw:eq4}
	\end{align} is valid. This is now because of a combination from the reasoning of the previous two operators. Like for $\cfmin$, we again have that if $\psi$ is unsatisfiable $\varphi$ has to characterize the complement of the universe (Line~\ref{uw:eq1}). If this is not the case, then there has to be at least one $W^h$ world satisfying $\varphi$ (Line~\ref{uw:eq2}). We have to ensure this because $\varphi$ may only characterize an effectively empty set if there exists no $\psi$-world. Further, for any $\varphi$-world there exists a $\varphi$-world $W^c$ closest to $W$ such that on any closer worlds $\varphi \rightarrow \psi$. Again, like for $\cfmin$, we add the inverse direction $\psi \rightarrow \varphi$ to ensure that $\varphi$ cannot be enlarged in this direction (left in Line~\ref{uw:eq3}). Further, any world further away than $W^c$ has to be included in $\varphi$ as they can all use $W^c$ as a closest $\varphi$-world (right in Line~\ref{uw:eq3}). Lastly, like for $\mcfmin$, if some world incomparable to $W^c$ which hence is on a different chain satisfies $\psi$, it also has to be included in $\varphi$ which we ensure by requiring $\lnot \varphi \rightarrow \lnot \psi$ (Line~\ref{uw:eq4}).
	
	\medskip
	
	\noindent Lastly, we have that $W$ satisfies $\varphi \emcfmin \psi$ iff 
	\begin{align}
		&\exists W' \ldot W' \models \varphi \land \forall W'' \ldot W'' \leq_W W'  \land W'' \models \varphi \Rightarrow\exists W^c \ldot W^c \leq_W W'' \land (W^c \models \varphi \land \psi) \, \land\label{em:eq1}\\
		&\forall W^h \ldot (W^h \leq_W W^c \Rightarrow W^h \models \lnot \varphi \rightarrow \lnot \psi )\land  \big( W^h \not\leq_W W^c \Rightarrow W^h \models \varphi \land W^h \models \psi \land\label{em:eq2} \\
		&(\exists W^n \ldot W^n \leq_W W^c \land W^n \not\leq_W W^h \land W^n \models \lnot \varphi) \Rightarrow\exists W^p \ldot W^p \leq_W W^h \land W^p \models \varphi \big)\label{em:eq3}
	\end{align} 
is valid. In Line~\ref{em:eq1}, we have the usual semantics of `Existential Might'. We now additionally require that worlds at least as close as the closest $\varphi$-worlds $W^c$ that satisfy $\psi$, which correspond to $W_3$ in Definition~\ref{def:emight_cf}, are included in $\varphi$ if they satisfy $\psi$ (left conjunct in Line~\ref{em:eq2}). Further, we require all worlds farther away or incomparable to $W^c$ to be included in $\varphi$ (right in conjunct in Line~\ref{em:eq2}). Similar to the reasoning for $\cfmin$, we have to ensure that these worlds $W^h$ do not induce a more minimal antecedent by qualifying as a $W^c$ by satisfying $\psi$ (end of Line~\ref{em:eq2}) and having a proper subset of at-least-equally-close worlds (Line~\ref{em:eq3}).
\end{proof}

\begin{theorem}
For any \qptlc formula $\varphi$, it is decidable to check  whether $\varphi$ is satisfiable when the similarity relation is from the family $\leq^{\texttt{QPTL}}$ and the universe is defined by a \qptl formula.
\end{theorem}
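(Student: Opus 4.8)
The plan is to decide satisfiability of a given \qptlc formula $\varphi$ by reducing it to the \emph{model-checking} problem of \hyperqptl over the single fixed model $(2^{\mathit{AP}})^\omega$ consisting of every trace over $\mathit{AP}$. Since \hyperqptl model checking is decidable~\cite{Rabe16}, decidability of \qptlc satisfiability follows. The crucial point is that this avoids the undecidable \hyperqptl \emph{satisfiability} problem: the trace we are searching for appears as a single existentially quantified trace variable, while all remaining (relational) quantification is guarded by the counterfactual conditionals and ranges over the fixed model rather than over an unknown set of traces.

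First I would remove every minimal counterfactual operator occurring in $\varphi$. By Lemma~\ref{lem:qe_cf}, each minimal counterfactual subformula is equivalent, on a reference world $W$, to the validity of a parameterized formula $\varphi_{\mathit{FO}}(W)$ that quantifies only over worlds and not over properties. After this rewriting the only counterfactual operators left are $\cf$, $\mcf$, $\ucf$, and $\emcf$, whose semantics (Definitions~\ref{def:would_cf},~\ref{def:might_cf},~\ref{def:uwould_cf}, and~\ref{def:emight_cf}) are themselves first-order statements that quantify over the worlds of the universe $\mathcal{U}$ and refer only to the \qptl-satisfaction relation $W_i \models \psi$ for antecedents and consequents $\psi$, and to the similarity relation $\leq_W$. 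Thus $\varphi$ is now a Boolean combination of \qptl formulas and of such guarded first-order statements over worlds.

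Next I would translate this into a \hyperqptl formula $\chi$. The reference world is captured by a single outermost existential trace quantifier $\exists \pi_W$ guarded by $\varphi_{\mathcal{U}}[\pi_W]$, where $\varphi_{\mathcal{U}}$ is the \qptl formula defining the universe; this variable simultaneously serves as the trace on which the top-level \qptl conjuncts of $\varphi$ are evaluated and as the subscript $W$ of every similarity relation $\leq_W$. Each world-quantifier produced by the (now first-order) counterfactual semantics becomes a \hyperqptl trace quantifier relativized to the universe, so that $\forall W_i\ldot(\cdots)$ becomes $\forall \pi_i\ldot(\varphi_{\mathcal{U}}[\pi_i] \rightarrow \cdots)$ and dually for existentials. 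A satisfaction statement $W_i \models \psi$ is encoded by indexing the atomic propositions of the \qptl formula $\psi$ with $\pi_i$ and keeping its propositional quantifiers, renamed apart, as \hyperqptl propositional quantifiers. Finally, since the similarity relation belongs to the family $\leq^{\texttt{QPTL}}$, each atom $W_a \leq_W W_b$ is replaced by the defining formula $\varphi_R$ with its three trace indices $\pi_1,\pi_2,\pi_3$ substituted by $\pi_W$, $\pi_a$, and $\pi_b$, matching the fusion $\pi_W \cupdot \pi_a \cupdot \pi_b$ on which $\leq^{\texttt{QPTL}}$ is defined. A structural induction then establishes that $(2^{\mathit{AP}})^\omega \models \chi$ if and only if $\varphi$ is satisfiable in $\mathcal{U}$ with respect to the chosen similarity relation.

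The hard part is keeping $\chi$ inside a fragment of \hyperqptl for which model checking is known to be decidable, since the counterfactual semantics nest trace quantifiers, the \qptl leaves contribute propositional quantifiers, and the Boolean structure of $\varphi$ (in particular negation) dualizes quantifiers of both kinds. I would therefore prenex $\chi$: because every trace quantifier originates from the reference-world or counterfactual structure and hence sits strictly above the \qptl leaves, whereas every propositional quantifier lives inside those leaves, renaming variables apart and pushing quantifiers outward yields a formula in which all trace quantifiers precede all propositional quantifiers, which in turn precede an \ltl body --- exactly the alternation-free prenex shape for which \hyperqptl model checking is decidable. Verifying that this prenexing never interleaves the two quantifier sorts, and that the fused encoding of $\leq_W$ faithfully reproduces the three-place relation underlying $\leq^{\texttt{QPTL}}$, is the delicate bookkeeping; the remaining steps are routine relativizations.
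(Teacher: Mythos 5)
Your proposal is correct and follows essentially the same route as the paper: eliminate the minimal operators via Lemma~\ref{lem:qe_cf}, translate the remaining world-quantification into trace quantifiers relativized to the universe formula, encode $\leq_W$ via the defining \qptl formula $\varphi_R$ over fused traces, prenex (which works because no quantifier sits under a temporal operator), and model-check the resulting $\exists\pi$-prefixed \hyperqptl formula on the model of all traces over $2^{\mathit{AP}}$. The paper works out the $\mcfmin$ case explicitly where you argue generically, but the reduction, the key lemma, and the reason the undecidable \hyperqptl satisfiability problem is avoided are all the same.
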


\begin{proof} We sketch how to encode the satisfiability of our \qptlc-formulas in a \hyperqptl model-checking problem over the most general model $\mathcal{M}$ that contains all traces over our alphabet $2^\mathit{AP}$. Let $\varphi^\mathcal{U}$ be the formula encoding the universe. 
	The semantics of the non-minimal operators contain quantification over all possible traces over our alphabet $2^\mathit{AP}$ and hence over the traces from the model $\mathcal{M}$. In Lemma~14 we showed that the semantics of the minimal operators can similarly be expressed by quantification over traces from $\mathcal{M}$ and in $\varphi^\mathcal{U}$ as well. We now illustrate at the example of $\mcfmin$ how such a formula can be encoded into a \hyperqptl-formula $\varphi^\qptl_\mathit{FO}(\pi)$ in prenex normal form, as even if $\mathit{AP} = \emptyset$, there exists at least one trace. Let $\varphi_R$ bet the \qptl-formula for the similarity metric $R \in \leq^\qptl$. For some formula $\varphi$ and trace variable $\pi \in \mathcal{V}$, $\varphi_\pi$ denotes the same formula where all atomic propositions are indexed with trace variable $\pi$. Then $\varphi^\qptl_\mathit{FO}(\pi)$ for some $\varphi \mcfmin \psi$ is of the following form:
	\begin{align*}
		\varphi^\qptl_\mathit{FO}(\pi) = &\exists \pi_1 \ldot \forall \pi_2 \ldot \exists \pi_3 \ldot \forall \pi_4 \ldot  \varphi^\mathcal{U}_{\pi_1} \land \varphi_{\pi_1} \land \bigg(\varphi^\mathcal{U}_{\pi_2} \rightarrow \Big(\varphi_{\pi_2} \rightarrow \varphi^\mathcal{U}_{\pi_3} \land\varphi_R(\pi,\pi_3,\pi_2) \, \land \varphi_{\pi_3} \land \psi_{\pi_3} \land\\ 
		&   \big(\varphi^\mathcal{U}_{\pi_4} \rightarrow (\lnot\varphi_R(\pi,\pi_3,\pi_4) \rightarrow (\lnot \varphi_{\pi_4} \rightarrow \lnot \psi_{\pi_4})) \land (\varphi_R(\pi,\pi_3,\pi_4) \rightarrow \varphi_{\pi_4})\big)\Big)\bigg) \enspace .
	\end{align*}
	We restrict quantification to the universe with conjunctions and implications containing $\varphi^\mathcal{U}$. Quantification over the complement of the universe uses $\lnot \varphi^\mathcal{U}$ instead.
	
	For the original \qptlc-formula $\varphi$, we apply this transformation to any counterfactual conditional appearing in it and again transform the result into prenex normal form. This is possible while retaining equisatisfiability because temporal operators appear only on the lowest level, i.e., no quantifier appears in the scope of a temporal operator. We denote the resulting formula with $\varphi^\mathit{full}_\mathit{FO}(\pi)$ and put a $\pi$-subscript on any top-level \qptl formula (i.e., that was not in the body of a counterfactual operator). 
	
	We can then model check the formula \hyperqptl-formula $\exists \pi \ldot \varphi^\mathit{full}_\mathit{FO}(\pi)$ on the model $\mathcal{M}$ to check whether there exists a trace satisfying our original \qptlc-formula. Since \hyperqptl model checking is decidable, so is the satisfiability checking of \qptlc. This approach has similarities to the idea of reducing the satisfiability problem of linear-time temporal logic (\texttt{LTL}) to model checking of the same logic~\cite{RozierV10}. Note that, crucially, our approach is possible because the models of our logic \qptlc are traces and not sets of traces as in \hyperqptl, which allows an encoding into the decidable model checking problem while avoiding the undecidable problem of \hyperqptl satisfiability checking.
\end{proof}

\begin{theorem}
	For a lasso-shaped trace $t$ and a \qptlc formula $\varphi$, it is decidable to check  whether $t \models \varphi $ when the similarity relation is from the family $\leq^{\texttt{QPTL}}$ and the universe is defined by a \qptl formula.
\end{theorem}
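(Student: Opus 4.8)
The plan is to reuse almost verbatim the \hyperqptl encoding constructed for Theorem~\ref{thm:sat}. That proof produces, for the given \qptlc formula $\varphi$, a \hyperqptl formula $\varphi^\mathit{full}_\mathit{FO}(\pi)$ in prenex normal form with a single free reference trace variable $\pi$, such that over the most general model $\mathcal{M}$ containing every trace in $(2^\mathit{AP})^\omega$, the trace assigned to $\pi$ satisfies $\varphi^\mathit{full}_\mathit{FO}(\pi)$ exactly when that trace, taken as the reference world $W$, satisfies $\varphi$ (the minimal operators being handled through the first-order rewriting of Lemma~\ref{lem:qe_cf}). This encoding already interprets $\pi$ as the $W$-subscript of each similarity relation $\varphi_R(\pi,\cdot,\cdot)$ and leaves the guarded trace quantification ranging over all of $\mathcal{M}$, i.e.\ over all candidate counterfactual worlds. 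For satisfiability one prepends $\exists \pi$ and searches for any witness; for trace checking I would instead constrain $\pi$ to be exactly the given trace $t$.

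The crux is therefore to pin $\pi$ down to $t$ inside \hyperqptl. Here I would exploit that $t$ is lasso-shaped: writing $t = t_0 \ldots t_i(t_{i+1} \ldots t_j)^\omega$, the singleton $\{t\}$ is ultimately periodic and hence $\omega$-regular, so it is definable by a \qptl (in fact \ltl) formula $\varphi_t$ with $\mathcal{L}(\varphi_t) = \{t\}$. Concretely, $\varphi_t$ spells out the finite stem position by position using nested $\LTLnext$ operators to fix the valuation of every $a \in \mathit{AP}$ at each position $0,\ldots,j$, and then enforces periodicity of the loop with a $\LTLglobally$-guarded conjunction equating position $i{+}1{+}k$ with position $j{+}1{+}k$ for each offset $k$ inside the period. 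I would then model check over $\mathcal{M}$ the \hyperqptl formula $\exists \pi \ldot (\varphi_{t,\pi} \land \varphi^\mathit{full}_\mathit{FO}(\pi))$, where $\varphi_{t,\pi}$ denotes $\varphi_t$ with all atomic propositions indexed by $\pi$; since $\varphi_{t,\pi}$ mentions only $\pi$, it pulls harmlessly into the prenex block in front of the inner quantifiers of $\varphi^\mathit{full}_\mathit{FO}(\pi)$.

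Correctness follows because $\mathcal{M}$ contains all traces, so $t \in \mathcal{M}$, and because $\mathcal{L}(\varphi_t) = \{t\}$ makes $t$ the unique element of $\mathcal{M}$ able to witness the outer existential; hence the formula holds on $\mathcal{M}$ iff $t \models \varphi$ in the sense of \qptlc (one may equivalently use $\forall \pi \ldot (\varphi_{t,\pi} \rightarrow \varphi^\mathit{full}_\mathit{FO}(\pi))$). Decidability is then immediate from the decidability of \hyperqptl model checking~\cite{Rabe16}, exactly as for Theorem~\ref{thm:sat}. I expect the only genuine work beyond reusing that theorem to be the \qptl-definability of the lasso singleton: this is the one step where the lasso hypothesis is indispensable, since an arbitrary non-eventually-periodic trace yields a non-$\omega$-regular singleton that no \qptl formula can capture. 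Verifying that $\varphi_t$ is satisfied by $t$ and by no other trace of $\mathcal{M}$, and that fixing $W=t$ in this way leaves the inner quantification over counterfactual worlds and contingencies untouched, is the main obstacle, though a routine one given the already-developed encoding.
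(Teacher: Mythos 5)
Your proposal matches the paper's own proof: the paper likewise encodes the lasso-shaped trace $t$ as a formula $\varphi^t$ characterizing $\{t\}$ and model checks $\forall \pi \ldot \varphi^t_\pi \rightarrow \varphi^\mathit{full}_\mathit{FO}(\pi)$ on the most general model, reusing the encoding from the satisfiability theorem. Your additional detail on constructing $\varphi_t$ via nested $\LTLnext$ for the stem and a $\LTLglobally$-guarded periodicity constraint for the loop is a correct elaboration of a step the paper leaves implicit.
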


\begin{proof}(Sketch)
	Since the trace $t$ is lasso-shaped, we encode it in a formula $\varphi^t$ that characterizes the set $\{t\}$. We can then use the ideas of the satisfiability proof above to solve the trace checking problem. However, we now model check the formula $\forall \pi \ldot \varphi^t_\pi \rightarrow \varphi^\mathit{full}_\mathit{FO}(\pi)$ on the model $\mathcal{M}$, not searching for any trace satisfying the formula but instead fixing the reference trace to $t$.
\end{proof}

	\begin{theorem}
	$X_0 = x^0_0 \land \ldots \land x^k_0 ,\ldots, X_n = x^0_n \land \ldots \land x^j_n$ are actual causes of $\varphi$ in $(\mathcal{M},\vec{u})$, iff $t^\mathcal{M}_c \models  \varphi_X \land \varphi \land \lnot \varphi_X \mcfmin \lnot \varphi$ in the universe $\mathcal{U}(\mathcal{M},c)$ with respect to the similarity relation $\leq_t\!\!(\{i_x \mid x \in V\})$ that tracks only the active interventions between traces, where 
	$$ \lnot \varphi_X = (\lnot x^0_0 \land \ldots \land \lnot x^k_0) \lor \ldots \lor (\lnot x^0_n \land \ldots \land \lnot x^j_n)  \text{ is in Blake canonical form~\cite{Blake1937}.}$$
\end{theorem}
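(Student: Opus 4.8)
The plan is to prove the biconditional by establishing a tight semantic correspondence between traces of the universe $\mathcal{U}(\mathcal{M},c)$ and the counterfactual settings appearing in Halpern's AC1--AC3, and then to match each clause of Definition~\ref{def:hpcausality} against a conjunct of the \qptlc formula. First I would characterize the models of $\mathcal{U}(\mathcal{M},c)$: since the encoding forces the empty set at every position after the first and Halpern restricts to acyclic structural equations, every trace is uniquely determined by the set $I \subseteq V$ of variables whose intervention proposition $i_x$ is active at position $0$ together with the set $C \subseteq V$ of active contingency propositions $c_x$. Inspecting the two equivalences shows that, with $c_x$ inactive, an active $i_x$ flips $x$ away from its observed value $U^\mathcal{M}_c(x)$, exactly as an intervention $\vec{X}\leftarrow\vec{x}'$ in AC2, whereas an active $c_x$ pins $x$ to $U^\mathcal{M}_c(x)$, exactly as holding a contingency variable of $\vec{W}$ at its actual value $\vec{w}$. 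Under $\leq_t\!(\{i_x \mid x \in V\})$ the trace $t_{(I,C)}$ is at least as close to $t^\mathcal{M}_c$ as $t_{(I',C')}$ precisely when $I \subseteq I'$; contingencies are invisible to the relation, so all traces sharing the same intervention set form one equivalence class, and $t^\mathcal{M}_c = t_{(\emptyset,\emptyset)}$ is the unique minimum.

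With this dictionary in place I would dispatch AC1 and the outer conjuncts first. Because $t^\mathcal{M}_c$ realizes the observed valuation, every literal of every $X_i$ holds there, so $t^\mathcal{M}_c \models \varphi_X$ holds automatically and $t^\mathcal{M}_c \models \varphi$ is equivalent to $(\mathcal{M},\vec{u}) \models \varphi$; together these are exactly AC1. The core step is then to show that the non-minimal conditional $\lnot \varphi_X \mcf \lnot \varphi$ holds at $t^\mathcal{M}_c$ iff AC2 holds for each cause $X_i$. Here I would exploit that the universe is finite, so there are no infinite descending chains and the Limit Assumption is recovered; the `Might' semantics then collapses to the requirement that every $\lnot \varphi_X$-world admits an at-least-as-close world satisfying $\lnot \varphi_X \land \lnot \varphi$. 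Since $\lnot \varphi_X$ is a disjunction of flipped-cause conjuncts, its closest witnesses are the traces whose intervention set is exactly some $V(X_i) = \{V(x^0_i),\ldots\}$, while the equally close traces range over all contingency choices $C$. Existence of one such trace satisfying $\lnot \varphi$ is literally AC2 for $X_i$: the intervention flips $X_i$ and the chosen contingency set is the witnessing $\vec{W}$. For the forward direction I would take, for an arbitrary $\lnot\varphi_X$-world realizing disjunct $i$, the single-cause intervention on $V(X_i)$ with the AC2 contingency as the required closer witness; for the converse, the absence of such a witness for some $X_i$ directly refutes AC2.

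Finally I would match the minimality of $\mcfmin$ against AC3 together with the completeness forced by Blake canonical form. Using the first-order characterization of $\mcfmin$ from Lemma~\ref{lem:qe_cf}, a strictly larger admissible antecedent $\varphi'$ can enlarge $\lnot\varphi_X$ in only two ways: by shortening one disjunct, which corresponds to proposing a proper subset of some $X_i$ as a cause, or by adding a new disjunct, which corresponds to proposing an additional prime implicant as a cause. The first possibility is blocked exactly when every $X_i$ is subset-minimal, i.e.\ AC3; the second is blocked exactly when no further actual cause exists, i.e.\ when $\{X_0,\ldots,X_n\}$ is the complete list, which is precisely what $\lnot\varphi_X$ being the disjunction of all prime implicants in Blake canonical form guarantees. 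Chaining these equivalences yields the theorem. The step I expect to be the main obstacle is the core `Might'/AC2 correspondence, and within it the claim that the $\leq_t$-closest $\lnot\varphi_X$-worlds are exactly the direct single-cause interventions: one must rule out realizing a flipped conjunct more cheaply through cascaded effects of interventions on other variables, which is where acyclicity of $\mathcal{F}$ and the prime-implicant structure of the disjuncts do the real work, and one must argue that the `Might' existential over an equivalence class of equally close traces faithfully mirrors AC2's existential over contingency sets $\vec{W}$.
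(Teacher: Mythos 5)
Your proposal follows essentially the same route as the paper's proof: the same intervention/contingency dictionary between traces of $\mathcal{U}(\mathcal{M},c)$ and Halpern's settings (the paper's auxiliary results relating flipped interventions and contingencies to the $i_x$ and $c_x$ propositions), the same identification of the closest $\lnot\varphi_X$-traces with full flips of a single conjunct modulo an equivalence class of contingency choices, and the same split of antecedent minimality into AC3 (no disjunct can be shortened) plus completeness via the Blake canonical form (no disjunct can be added). The only ingredient you leave implicit is the paper's first auxiliary result --- that by minimality of a cause the AC2 witness can be normalized to the intervention flipping \emph{all} of its variables, since any un-flipped variable can be moved into the contingency $\vec{W}$ --- which is precisely what licenses your dictionary's identification of an active $i_x$ with a value flip.
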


\begin{proof} This is effectively a finite setting modeled on the first position of a trace, such that the Limit Assumption holds and for any formula $\lnot \varphi_X$ without temporal operators, there exists a unique (finite) set of closest traces from $t^\mathcal{M}_c$ that satisfy it. Let $I = \{ i_x \mid x \in V\}$ denote the set of intervention variables. We will use the following auxiliary results throughout the proof:
	\begin{enumerate}
		\item\label{aux1} For any actual cause $X_0 = x^0_0 \land \ldots \land x^k_0$ for effect $\varphi$, there is a contingency $W$ such that  $(\mathcal{M},\vec{u}) \models [X\leftarrow \lnot x^0_i \land \ldots \land \lnot x^k_i,W \leftarrow w^0_p \land \ldots \land w^q_p ] \lnot \varphi$, i.e., the intervention that flips all the variables is a witnessing intervention for AC2. With Boolean variables, this follows directly from minimality, as any variable in an intervention that is not negated could be moved to the contingency $W$ instead, yielding a more minimal cause.
		
		\item\label{aux2} There is a direct mapping from interventions that flip all values to intervention variables on traces: By induction over the number of structural equations we can show that for all traces $t \in \mathcal{U}(\mathcal{M},c)$ and variables $v \in V$, we have $(\mathcal{M},\vec{u}) \models [X\leftarrow x^0_i \land \ldots \land x^k_i,W \leftarrow w^0_p \land \ldots \land w^q_p ] v$ where for all $x^r_i \in X: (\mathcal{M},\vec{u})\models \lnot x^r_i $, iff $v \in t[0]$, where $t$ is the unique trace that has all the contingency variables of  $W$ enabled at the first position, i.e., for all $0 \leq r \leq q: V(w^r_p) \in t[0]$, and all intervention variables corresponding to literals in $X$ are also enabled: for all $0 \leq r \leq k: (x^k_p \leftrightarrow V(x^k_p) \in t^\mathcal{M}_c) \rightarrow i_{V(x^k_p)} \in t[0]$.
		\item\label{aux3} Let $t_\mathit{cf} $ be a closest $\lnot \varphi_X$-trace to $t^\mathcal{M}_c$ such that $t_\mathit{cf}  \models (\lnot x^0_z \land \ldots \land \lnot x^j_z)$ for some conjunction and $t_\mathit{cf}  \models \lnot \varphi$. We can use induction over the length of the conjunction to show that for all $i \in I: i \in t_\mathit{cf} [0]$ iff there is some  $0 \leq r \leq z: i_{V(x^r_z)}= i \land  c_{V(x^r_z)} \not\in t_\mathit{cf} [0]$, i.e., exactly the intervention variables corresponding to literals in the conjunction are enabled on $t_\mathit{cf} $, while all corresponding contingency variables are disabled. 
	\end{enumerate}
	We now proceed to proof the two directions of the equivalence separately.\\
``$\Rightarrow$'': We know that no intervention or contingency variables are true on $t^\mathcal{M}_c$ by construction. With Result~\ref{aux2} and the fact that $X_0,\ldots, X_n$ are actual causes of $\varphi$ in $(\mathcal{M},\vec{u})$, it follows from AC1 that $t^\mathcal{M}_c \models \varphi_X \land \varphi$. To show that $\lnot \varphi_X \mcfmin \lnot \varphi$ is satisfied, consider that from Result~\ref{aux2} and Result~\ref{aux3}, it follows the closest traces from  satisfying $\lnot \varphi_X$ have to differ in all of the values in at least one of the conjuncts of $\varphi_X$, w.l.o.g.\ assume this to be $(x^0_0 \lor \ldots \lor x^n_0)$. With  Result~\ref{aux1} we have $(\mathcal{M},\vec{u}) \models [X_0\leftarrow \lnot x^0_i \land \ldots \land \lnot x^k_i,W_0 \leftarrow w^0_p \land \ldots \land w^q_p ] \lnot \varphi$, and with Result~\ref{aux2} it then follows that there is a trace $t$ such that $t \models \lnot \varphi$, and from Result~\ref{aux3} it follows that $t$ is a closest $\lnot \varphi_X$-trace. Hence, We have $t^\mathcal{M}_c \models \lnot \varphi_X \mcf \varphi$. To show that $\lnot \varphi_X$ is also a minimal antecedent, consider what would happen if a trace was added to $\mathcal{L}(\lnot \varphi_X)$. Such a trace $t''$ has to satisfy $\varphi_X$, and hence satisfies one of the literals in each of that formula's conjuncts. That means that the changes between $t^\mathcal{M}_c$ and $t''$ are either incomparable to any other closest $\lnot \varphi_X$-trace, or a proper subset. In the former case, this induces another actual cause, in the latter case, this would mean one of the causes is not minimal. In both cases, we have a contradiction. Note that $\lnot \varphi_X$ is in Blake canonical form because all of the causes are minimal, and hence prime implicants, i.e., no  $X_v$ implies another $X_w$, and this also holds for their negations in $\lnot \varphi_X$. This closes this direction\\
	``$\Leftarrow$'': We first show AC1 holds for all of the causes. We know that $(\mathcal{M},\vec{u}) \models \varphi$ from $t^\mathcal{M}_c \models \varphi$ and Result~\ref{aux2}.  To show that $t^\mathcal{M}_c \models (x^0_i \land \ldots \land x^u_i)$ for all $X_i$, consider $t_\mathit{cf}$ as defined in Result~\ref{aux3}. Since we know from the proof of Result~\ref{aux3} that on $t_\mathit{cf}$ exactly the intervention variables corresponding to the literals in the conjunction are set, and intervention variables flip the value of variables with respect to $t^\mathcal{M}_c$, we can deduce that all the literals occurring in the conjunction have the inverted value in $t^\mathcal{M}_c$, which proves AC1 for all of the causes using Result~\ref{aux2}. AC2 follows directly from $t^\mathcal{M}_c \models \lnot \varphi_X \mcfmin \varphi$, Result~\ref{aux3} which states on some closest trace $t_\mathit{cf} \models \lnot \varphi$ all literals $x^0_z \land \ldots \land x^j_z$ are intervened upon, and Result~\ref{aux2} which relates the intervention and contingency variables  $w^0_z \land \ldots \land w^q_z$ of $t_\mathit{cf}$ to the causal model $(\mathcal{M},\vec{u})$. Then, we can deduce that $(\mathcal{M},\vec{u}) \models [X \leftarrow \lnot x^0_z \land \ldots \land \lnot x^j_z, W \leftarrow w^0_z \land \ldots \land w^q_z]\lnot \varphi$. Lastly, we can show AC3 by contradiction: If one of the causes was not minimal, then Result~\ref{aux2} and Result~\ref{aux3} imply there is closer $\lnot \varphi$-trace, and therefore $\lnot \varphi_X$ is not a minimal antecedent, a contradiction. 
\end{proof}

%------------------------------------------------------------------------------
% Index
%\printindex

%------------------------------------------------------------------------------
\end{document}